\def\CTSM{CTMSeq}
\newcommand{\mfi}{\mathsf{mfi}}
\newcommand{\minidx}{\mathsf{minidx}}
\newcommand{\CT}{\mathit{CT}}
\newcommand{\nil}{\mathit{nil}}
\newcommand{\SUBIDX}[2]{\mathcal{I}^{#1}_{#2}}
\newcommand{\RC}{\mathrm{R}}
\newcommand{\LC}{\mathrm{L}}
\newcommand{\subseq}{\sqsubseteq}
\newcommand{\LFI}{\mathit{LFI}}
\newcommand{\new}{\mathit{new}}
\newcommand{\set}[1]{\{\kern0.05em#1\kern0.05em\}}
\newcommand{\sete}[1]{\{\kern0.00em#1\kern0.00em\}}
\DeclareMathOperator*{\argmax}{arg\,max}
\newtheorem{theorem}{Theorem}
\newtheorem{corollary}{Corollary}
\newtheorem{lemma}{Lemma}
\newtheorem{example}{Example}
\theoremstyle{definition}
\newtheorem{definition}{Definition}
\begin{document}
\title{Cartesian Tree Subsequence Matching}
\author[1]{Tsubasa~Oizumi}
\author[1]{Takeshi~Kai}
\author[2]{Takuya~Mieno}
\author[3,4]{Shunsuke~Inenaga}
\author[2]{Hiroki Arimura}
\affil[1]{Graduate School of Information Science and Technology, Hokkaido University\\
  \texttt{oizumi.tsubasa.e2@elms.hokudai.ac.jp}}
\affil[2]{Faculty of Information Science and Technology, Hokkaido University\\
  \texttt{\{takuya.mieno,arim\}@ist.hokudai.ac.jp}}
\affil[3]{Department of Informatics, Kyushu University\\
  \texttt{inenaga@inf.kyushu-u.ac.jp}}
\affil[4]{PRESTO, Japan Science and Technology Agency}
\date{}
\maketitle
\begin{abstract}
  Park~\textit{et al}. [TCS 2020] observed that the similarity between two (numerical) strings can be captured by the Cartesian trees: The Cartesian tree of a string is a binary tree recursively constructed by picking up the smallest value of the string as the root of the tree.
  Two strings of equal length are said to Cartesian-tree match if their Cartesian trees are isomorphic. 
  Park~\textit{et al}. [TCS 2020] introduced the following \emph{Cartesian tree substring matching} (\emph{CTMStr}) problem: Given a text string $T$ of length $n$ and a pattern string of length $m$, find every consecutive substring $S = T[i..j]$ of a text string $T$ such that $S$ and $P$ Cartesian-tree match.
  They showed how to solve this problem in $\tilde{O}(n+m)$ time.
  In this paper, we introduce the \emph{Cartesian tree subsequence matching} (\emph{\CTSM}) problem, 
  that asks to find every minimal substring $S = T[i..j]$ of $T$ such that 
  $S$ contains a subsequence $S'$ which Cartesian-tree matches $P$.
  We prove that the {\CTSM} problem can be solved efficiently, in $O(m n p(n))$ time,
  where $p(n)$ denotes the update/query time for dynamic predecessor queries.
  By using a suitable dynamic predecessor data structure,
  we obtain $O(mn \log \log n)$-time and $O(n \log m)$-space solution for {\CTSM}.
  This contrasts {\CTSM} with closely related \emph{order-preserving subsequence matching} (\emph{OPMSeq})
  which was shown to be NP-hard by Bose \textit{et al}. [IPL 1998].
\end{abstract}

\section{Introduction}
\label{ch:intro}

A time series is a sequence of events which can be represented by symbols or numbers in many cases.
An \emph{episode} is a collection of events which occur in a short time period.
The \emph{episode matching} problem asks to find every \emph{minimal} substring $S = T[i..j]$ of a text $T$ such that a pattern $P$ is a (non-consecutive) subsequence of $S$.
Let $n$ and $m$ be the lengths of the text $T$ and the pattern $P$, respectively.
There exists a na\"ive $O(mn)$-time $O(1)$-space algorithm for episode matching, which scans the text back and forth.
In 1997, Das~\textit{et al}.~\cite{DasFGGK97} presented a weakly subquadratic $O(mn / \log m)$-time $O(m)$-space algorithm for episode matching.
Very recently, Bille~\textit{et al}.~\cite{abs-2108-08613} showed that even a simpler version of episode matching, which computes the \emph{shortest} substring containing $P$ as a subsequence, cannot be solved in strongly subquadratic $O((mn)^{1-\epsilon})$ time for any constant $\epsilon > 0$, unless the Strong Exponential Time Hypothesis (SETH) fails.

In some applications, such as analysis of time series data of stock prices, 
one is often more interested in finding patterns of price fluctuations rather than the exact prices.
The \emph{order preserving matching} (\emph{OPM}) model~\cite{kim:eades:etal:park:tokuyama:tcs2014oppm} is motivated for such purposes, 
where the task is to find consecutive substring $S$ of a numeric text string $T$ such that the relative orders of values in $S$ are the same as that of a query numeric pattern string $P$.
The order preserving \emph{substring} matching problem (OPMStr) can be solved in $\tilde{O}(n+m)$ time~\cite{kim:eades:etal:park:tokuyama:tcs2014oppm,kubica2013linear,ChoNPS15,CrochemoreIKKLP16}.
On the other hand, the order preserving \emph{subsequence} matching problem (OPMSeq) is known to be NP-hard~\cite{bose:ipl1998lis:npc:pattern}.
Another known model of pattern matching, called \emph{parameterized matching} (\emph{PM}), 
is able to capture structures of strings, namely, two strings are said to parameterized match 
if one string can be obtained by applying a character bijection to the other string~\cite{Baker93}.
Again, the parameterized \emph{substring} matching problem (PMStr) can be solved in $\tilde{O}(n+m)$ time (see~\cite{Baker93,Baker96,IduryS96,FujisatoNIBT21,DBLP:journals/dam/MendivelsoTP20} and references therein), 
but the parameterized \emph{subsequence} matching (PMSeq) is NP-hard~\cite{KellerKL09}.
We remark that both order preserving matching and parameterized matching belong to 
a general framework of pattern matching called the \emph{substring-consistent equivalence relation} (\emph{SCER})~\cite{MatsuokaSCER2016}.
Let $\approx$ denote a string equivalence relation,
and suppose that $X \approx Y$ holds for two strings $X$ and $Y$ of equal length $n$.
We say that $\approx$ is an SCER if $X[i..j] \approx Y[i..j]$ hold for any $1 \leq i \leq j \leq n$.

\emph{Cartesian tree matching} (\emph{CTM}), proposed by Park~\textit{et al}.~\cite{ParkBALP20}, is a new class of SCER 
that is also motivated for numeric string processing.
The Cartesian tree $\CT(T)$ of a string $T$ is a binary tree such that 
the root of $\CT(T)$ is $i$ if $i$ is the leftmost occurrence of the smallest value in $T$,
the left child of the root $T[i]$ is $\CT(T[1..i-1])$,
and the right child of the root $T[i]$ is $\CT(T[i+1..n])$.
We say that two strings Cartesian-tree match if the Cartesian trees of the two strings are isomorphic as ordered trees~\cite{hoffmann1982pattern}, i.e.,
preserving both the parent and sibling orders.
Observe that CTM is similar to OPM. For instance, strings $(\mathtt{7}, \mathtt{2}, \mathtt{3}, \mathtt{1}, \mathtt{5})$ and $(\mathtt{9}, \mathtt{2}, \mathtt{4}, \mathtt{1}, \mathtt{6})$
both Cartesian-tree match and order-preserving match.
It is easy to observe that if two strings order-preserving match, then they also Cartesian-tree match, but the opposite is not true in general.
Thus CTM allows for more relaxed pattern matching than OPM.
Indeed, 
the constraints for OPM that impose the relative order of all positions in the pattern
can be too strict for some applications~\cite{ParkBALP20}.
For example, two strings $(\mathtt{7}, \mathtt{2}, \mathtt{3}, \mathtt{1}, \mathtt{5})$ and $(\mathtt{6}, \mathtt{2}, \mathtt{4}, \mathtt{1}, \mathtt{9})$ both having 
a w-like shape do not order-preserving match.
On the other hand, their similarity can be captured with CTM,
since $(\mathtt{7}, \mathtt{2}, \mathtt{3}, \mathtt{1}, \mathtt{5})$ and $(\mathtt{6}, \mathtt{2}, \mathtt{4}, \mathtt{1}, \mathtt{9})$ Cartesian-tree match.
This lead to the study of the Cartesian tree \emph{substring} matching (CTMStr) problem,
which asks to find every substring $S$ of $T$ such that $S$ and $P$ Cartesian-tree match.
The CTMStr problem can be solved efficiently, in $\tilde{O}(n + m)$ time~\cite{ParkBALP20,SongGRFLP21}.

On the other hand, since real-world numeric sequences contain errors and indeterminate values,
patterns of interest may not always appear consecutively in the target data.
Therefore numeric sequence pattern matching scheme, which allows for skipping some data and 
matching to non-consecutive subsequences, is desirable.
However, such pattern matching is not supported by the CTMStr algorithms.
Given the aforementioned background,
this paper introduces Cartesian tree \emph{subsequence} matching (CTMSeq),
and further shows that this problem can be solved efficiently.
Namely, we can find, in time polynomial in $n$ and $m$,
every minimal substring $S = T[i..j]$ of a text $T$ such that
there exists a subsequence $S'$ of $S$ where $\CT(S')$ and $\CT(P)$ are isomorphic.
We remark that this is the CTM version of episode matching, 
which is also the first polynomial-time subsequence matching under SCER 
(except for exact matching, which is episode matching).

The contribution of this paper is the following:
\begin{itemize}
\item We first present a simple algorithm for solving {\CTSM} in $O(mn^2)$ time and $O(mn)$ space based on dynamic programming (Section~\ref{ch:algobasic}, Algorithm~\ref{alg:dp}).
\item We present a faster $O(mn \log \log n)$-time $O(mn)$-space algorithm for solving {\CTSM} (Section~\ref{ch:algofaster}, Algorithm~\ref{alg:vEB}).
To achieve this speed-up, we exploit useful properties of our method
that permits us to improve the $O(n^2)$-time part of Algorithm~\ref{alg:dp} with $O(n)$ predecessor queries.
\item We present space-efficient versions of the above algorithms that require only $O(n \log m)$ space, which are based on the idea from the \emph{heavy-path decomposition} (Section~\ref{ch:hld}).
\end{itemize}

Technically speaking, our algorithms are related to the work by Gawrychowski et al.~\cite{GawrychowskiGL20},
who considered the problem of deciding whether two \emph{indeterminate} strings of equal length $n$ match under SCER.
They showed that the CTM version of the problem can be solved in $O(n \log^2 n)$ time with $O(n \log n)$ space 
when the number $r$ of uncertain characters in the strings is constant, using predecessor queries.
They also proved that the OPM and PM versions of the problem are NP-hard for $r = 2$.
NP-hardness for the OPM version in the case of $r = 3$ was previously shown in~\cite{HenriquesFRB18}.
Our results on {\CTSM} can be seen as yet another example that differentiates between CTM and OPM 
in terms of the time complexity class.

 \section{Preliminaries}
\label{ch:prelim}

\subsection{Basic Notations and Assumptions}
For any positive integers $i, j$ with $1\le i\le j$, we define a set $[i] = \sete{1,\dots,i}$ of integers and
a discrete interval $[i,j] = \set{i, i+1, \ldots, j}$.
Let $\Sigma = \{1, \ldots, \sigma\}$ be an \emph{integer alphabet} of size $\sigma$.
An element of $\Sigma$ is called a \emph{character}.
A sequence of characters is called a \emph{string}.
The \emph{length} of string $S$ is denoted by $|S|$.
The empty string $\varepsilon$ is the string of length $0$.
For a string $S = (S[1], S[2], \ldots, S[|S|])$, $S[i]$ denotes the $i$-th character of $S$ for each $i$ with $1 \le i \le |S|$.
For each $i,j$ with $1 \le i \le j \le |S|$, $S[i.. j]$ denotes the \emph{substring} of $S$ starting from $i$ and ending at $j$.
For convenience, let $S[i.. j] = \varepsilon$ for $i > j$.
We write $\min(S) := \min \{S[i] \mid i \in [n]\}$ for the minimum value contained in the string $S$.
In this paper, all characters in the string $S$ assume to be different from each other without loss of generality~\cite{kim:eades:etal:park:tokuyama:tcs2014oppm}
\footnote{
If the same character occurs more than once in $S$, the pair $ci = (c, i)$ of the original character $c$ and index $i$ can be extended as a new character to satisfy the assumption.
}.
Under the assumption, we denote by $\minidx(S) := i$ the unique index satisfying the condition $S[i] = \min(S)$.
For any $0\le m\le n$, let $\SUBIDX nm$ be the set consisting of
all
\textit{subscript sequence} $I=(i_1, \ldots, i_m) \in [n]^m$ in ascending order satisfying $1 \leq i_1 < \cdots < i_m \leq n$.
Clearly, $|\SUBIDX nm| = \binom{n}{m}$ holds.
For a subscript sequence $I=(i_1, \ldots, i_m) \in \SUBIDX nm$, we denote by $S_I := (S[i_1], \ldots, S[i_m])$ the \emph{subsequence} of $S$ corresponding to $I$.
Intuitively, a subsequence of $S$ is a string obtained by removing zero or more characters from $S$ and concatenating the remaining characters without changing the order.
For a subscript sequence $I=(i_1, \ldots, i_m) \in \SUBIDX nm$ and its elements $i_s, i_t \in I$ with $i_s\le i_t$, $I[i_s:i_t]$ denotes the substring of $I$ that starts with $i_s$ and ends with $i_t$.
In this paper, we assume the standard \emph{word RAM model} of word size $w = \Omega(\log n)$.
Also we assume that $\sigma \le 2^w$, i.e., any character in $\Sigma$ fits within a single word.

\subsection{Cartesian Tree}

\begin{figure}[tb]
\begin{center}
\includegraphics[scale=0.70]{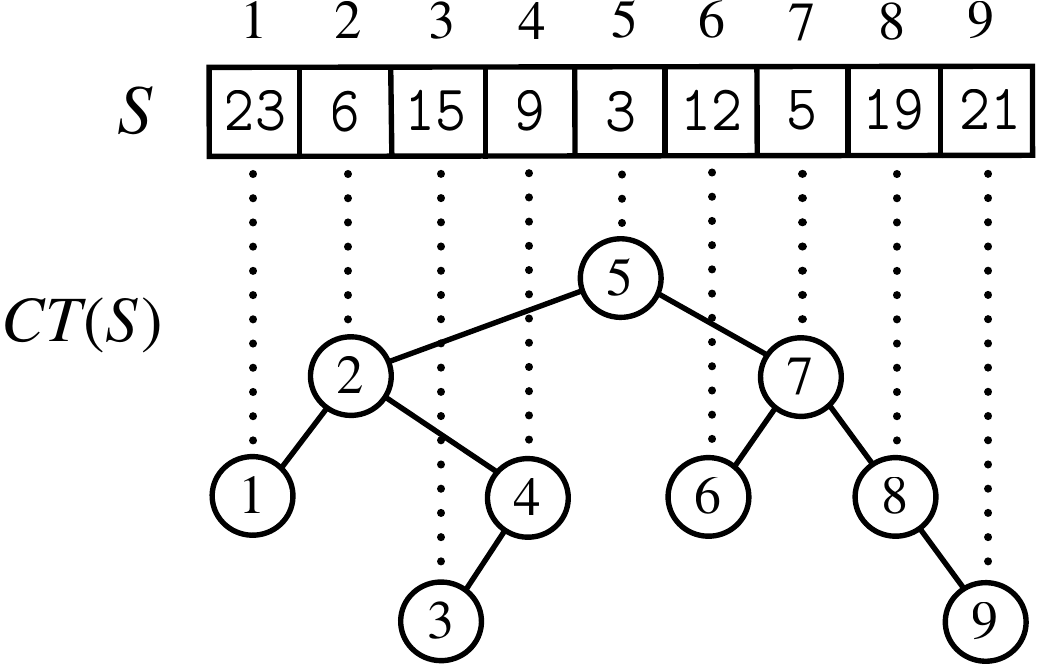}
\end{center}
\caption{
  Illustration for Cartesian tree $\CT(S)$ of $S = (\texttt{23}, \texttt{6}, \texttt{15}, \texttt{9}, \texttt{3}, \texttt{12}, \texttt{5}, \texttt{19}, \texttt{21})$.
  Since the minimum value among $S$ is $S[5]$, node $v = 5$ is the root of $\CT(S)$, $\CT(S[1..4])$ is the left subtree of $v$, and $\CT(S[6..9])$ is the right subtree of $v$.
  Then, $v.\LC = 2$, $v.\RC = 7$, $S_v = S[1..9] = S$, $S_{v.\LC} = S[1..4]$, and $S_{v.\RC} = S[6..9]$.
}
\label{fig:cartesian_tree}
\end{figure}

The Cartesian tree of string $S$, denoted by $\CT(S)$, is the ordered binary tree recursively defined as follows:
If $S = \varepsilon$, then $\CT(S)$ is empty, and otherwise,
$\CT(S)$ is the tree rooted at $v$ such that
the left subtree of $v$ is $CT(S[1.. v-1])$, and
the right subtree of $v$ is $CT(S[v+1.. |S|])$,
where $v = \minidx(S)$.
For a node $v$,
we denote by $v.\LC$ the left child of $v$ if such a child exists and let $v.\LC = \nil$ otherwise. Similarly, we use the notation $v.\RC$ for the right child of $v$.
$CT(S)_v$ denotes the subtree of $CT(S)$ rooted at $v$. 
We say that two Cartesian trees $CT(S)$ and $CT(S')$ are \textit{isomorphic} as ordered trees~\cite{hoffmann1982pattern}, denoted  $CT(S) = CT(S')$.

There is an interplay between a sequence and its Cartesian tree as follows:
We note that the indices of $S$ identify the nodes of $CT(S)$, and \textit{vice versa}. 
For any node $v$ of $CT(S)$, we define the substring $S_v$ of $S$ recursively as follows: 
\begin{enumerate}[(i)]
\item If $v$ is the root of $CT(S)$, then $S_v = S = S[1..|S|]$. 
\item If $v$ is a node with substring $S_v = S[\ell..r]$, then $S[v]$ is the minimum value in $S[\ell.. r]$, $S_{v.\LC} = S[\ell..v-1]$, and $S_{v.\RC} = S[v+1..r]$.
\end{enumerate}
An example of a Cartesian tree is shown in Figure~\ref{fig:cartesian_tree}.

\subsection{Cartesian Tree Subsequence Matching}
Let $T$ be a \emph{text} string of length $n$ and $P$ be a \emph{pattern} string of length $m \le n$. 
We say that a pattern $P$ \textit{matches} text $T$, denoted by $P \subseq T$, if there exists a subscript sequence 
 $I=(i_1, \ldots, i_m) \in \SUBIDX nm$
of $T$
such that $CT(T_I) = CT(P)$ holds. 
Then, we refer to the subscript sequence $I$ as a \textit{trace}.

A possible choice of the notion of occurrences of a pattern $P$ in $T$ is to employ the traces of $P$ as occurrences. 
However, it is not adequate since there can be exactly $\binom{n}{m}$ traces~\footnote{
which can be achieved by monotone sequences for $P$ and $T$. 
} for a text and a pattern of lengths $n$ and $m$. 
Instead, we employ \emph{minimal occurrence intervals} as occurrences defined as follows.

\begin{definition}[minimal occurrence interval]\rm 
For a text $T[1..n]$ and $P[1..m]$, an interval $[\ell, r] \subseteq [n]$ is said to be an \emph{occurrence interval for pattern $P$ over text $T$} if $P \subseq T[\ell..r]$ holds. 
It is said to be \emph{minimal} if 
there is no occurrence interval $[\ell', r']$ for $P$ over $T$ such that $[\ell^{\prime}, r^{\prime}] \subsetneq [\ell, r]$.
\end{definition}
\begin{example}
\label{example:match}
Let text $T = (\mathtt{11}, \mathtt{3}, \mathtt{8}, \mathtt{6}, \mathtt{16}, \mathtt{19}, \mathtt{5}, \mathtt{15}, \mathtt{21}, \mathtt{24})$ and pattern $P = (\mathtt{9}, \mathtt{2}, \mathtt{17}, \mathtt{4}, \mathtt{13})$.
The occurrence interval $[3, 9]$ for $P$ over $T$ is minimal 
since $I=(3, 4, 6, 8, 9)$ is a trace with $\CT(T_I) = \CT(P)$, 
and there is no other occurrence interval $[\ell, r] \subsetneq [3, 9]$ for $P$ over $T$. 
The interval $[1, 8]$ is an occurrence interval, however, it is not minimal since there is another (minimal) occurrence interval $[1, 5] \subsetneq [1, 8]$ for $P$ over $T$.
Overall, all minimal occurrence intervals for $P$ over $T$ are $[1, 5]$ and $[3, 9]$.
\end{example}
From the definition,
there are $O(n^2)$ occurrence intervals for $P$ over $T$,
while there are $O(n)$ minimal occurrence intervals.
If we have the set of all minimal occurrence intervals, we can easily enumerate all occurrence intervals in constant time per occurrence interval.
Thus, we focus on minimal occurrences in this paper. 
Now, the main problem of this paper is formalized as follows:

\begin{definition}[Cartesian Tree Subsequence Matching ({\CTSM})]\rm
Given two strings $T[1..n]$ and $P[1..m]$, find all minimal occurrence intervals for $P$ over~$T$.
\end{definition}

We can easily see that {\CTSM} can be solved in $O(m \binom{n}{m})$ time by simply enumerating all possible subscript sequences. 
However, its time complexities are too large to apply to real-world data sets. 
Hence, our goal here is to devise efficient algorithms running in polynomial time.

In the rest of this paper, we fix text $T$ of arbitrary length $n$ and pattern $P$ of arbitrary length $m$ with $0 < m \le n$.

 \section{\texorpdfstring{$O(mn^2)$}{O(mn2)}-time Dynamic Programming Algorithm}
\label{ch:algobasic}
This section describes an algorithm based on dynamic programming which runs in time $O(mn^2)$.
We later improve the running time to $O(mn \log \log n)$ in Section~\ref{ch:algofaster}.

\subsection{A Simple Algorithm}
By dynamic programming approach, we can obtain a simple algorithm for {\CTSM} with $O(mn^3)$ time and $O(mn^2)$ space complexities as follows. 
It recursively decides if the substring $P_v$ matches in $T[\ell..r]$ for all indices $v$ of $P$ and all intervals $[\ell.. r]$ in $T$ from shorter to larger. 
These complexities mainly come from 
that it iterates the loop for $O(n^2)$ possible intervals in $T$. 
In the following section, we devise more efficient algorithms in time and space complexities by introducing the notion of \emph{minimal fixed-intervals}. 

\subsection{Minimal Fixed-interval}
To solve {\CTSM} without iterating for all possible intervals, we focus on fixing the corresponding locations between node $v$ of $CT(P)$ and index $i$ of $T$.
For a node $v \in [m]$ and index $i \in [n]$,
we refer to a pair $(v, i)$ as a $pivot$.
Then,
we define the minimal interval fixed with pivot $(v, i)$,
called the \emph{minimal fixed-interval}.
\begin{definition}[(minimal) fixed-interval]\rm
\label{def:kukan}
For pivot $(v, i) \in [m] \times [n]$,
interval $[\ell, r] \subseteq [n]$ is called a \emph{fixed-interval with the pivot $(v, i)$} if there exists a trace $I = (i_1, \ldots, i_{|P_v|}) \in \SUBIDX{n}{|P_v|}$
satisfying the following conditions (i)--(iv):
(i) $i$ is an element of $I$,
(ii) $[i_1, i_{|P_v|}] \subseteq [\ell, r]$
(iii) $CT(T_I) = CT(P_v)$ holds, and
(iv) $T[i] = \min(T_I)$ holds. 
Furthermore, a fixed-interval $[\ell, r]$ with the pivot $(v, i)$ is said to be \emph{minimal} if there is no fixed-interval $[\ell^{\prime}, r^{\prime}] \subsetneq [\ell, r]$ with the pivot $(v, i)$ \end{definition}
We show examples of (minimal) fixed-intervals on Figure~\ref{fig:min-range}.
\begin{figure}[tb]
\begin{center}
\includegraphics[scale=0.54]{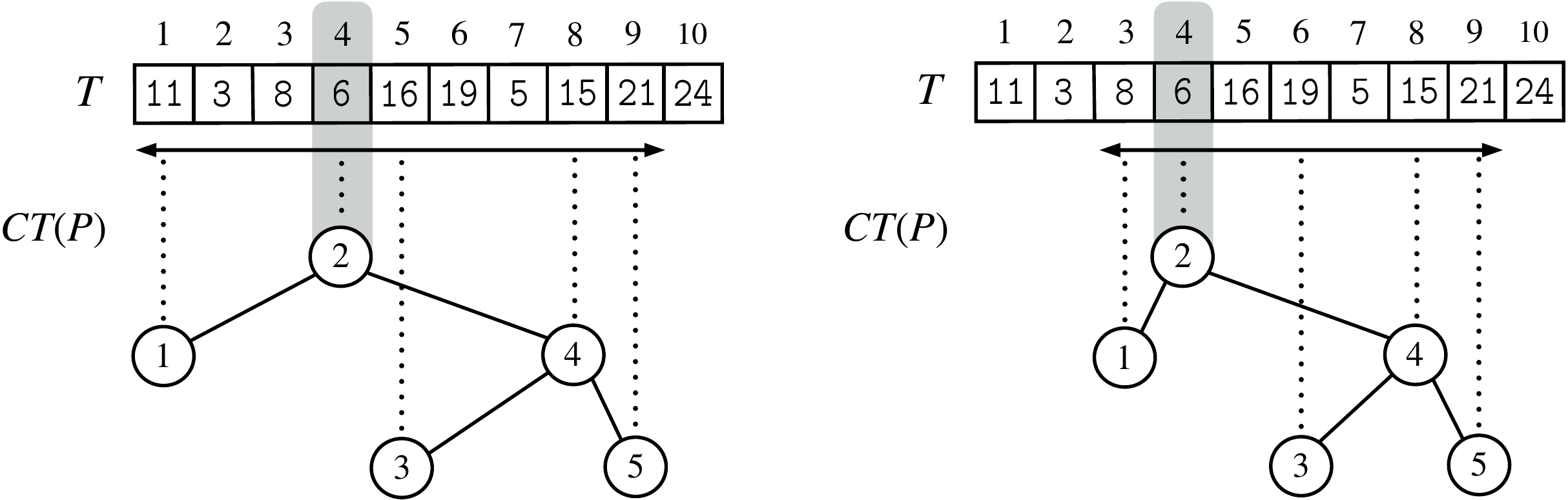}
\end{center}
\caption{
  Illustration for fixed-intervals for the pivot $(v, i)$, where $T=(\mathtt{11}$, $\texttt{3}$, $\texttt{8}$, $\texttt{6}$, $\texttt{16}$, $\texttt{19}$, $\texttt{5}$, $\texttt{15}$, $\texttt{21}$, $\texttt{24})$, $P=(\mathtt{9}$, $\texttt{2}$, $\texttt{17}$, $\texttt{4}$, $\texttt{13})$, and $(v, i) = (2, 4)$.
In the left figure,
  for the trace $I = (1,$ $\textbf{4}$, $5$, $8$, $9)$ indicated by dotted lines,
  the interval $[1, 9]$ is a fixed-interval with the pivot $(v, i)$.
  In the right figure, $[3, 9]$ is a minimal fixed-interval with the pivot $(v, i)$ since there is no fixed-interval $[\ell, r] \subsetneq [3, 9]$ with the pivot $(v, i)$.
}
\label{fig:min-range}
\end{figure}
Here, we give an essential lemma concerning minimal fixed-intervals.
\begin{lemma}
\label{lem:atmost}
  For any pivot $(v, i) \in [m] \times [n]$, there exists at most one minimal fixed-interval with $(v, i)$.
\end{lemma}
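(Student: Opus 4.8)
The plan is to fix the pivot $(v,i)$ and study the family $\mathcal{F}$ of \emph{all} fixed-intervals with this pivot. First note that $\mathcal{F}$ is closed upward under inclusion among subintervals of $[1,n]$ that contain $i$: the very trace witnessing $[\ell,r]\in\mathcal{F}$ also witnesses every $[\ell',r']\supseteq[\ell,r]$. Hence it suffices to prove that $\mathcal{F}$ is closed under the \emph{meet} $[\ell_1,r_1]\wedge[\ell_2,r_2]:=[\max(\ell_1,\ell_2),\min(r_1,r_2)]$, which is a well-defined interval containing $i$ because conditions (i)--(ii) of Definition~\ref{def:kukan} force $i\in[\ell,r]$ for every $[\ell,r]\in\mathcal{F}$. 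Granting meet-closedness, uniqueness is routine: if $\mathcal{F}=\emptyset$ the claim is vacuous; otherwise pick some $[\ell_0,r_0]\in\mathcal{F}$, take an inclusion-minimal $M\in\mathcal{F}$ among the finitely many subintervals of $[\ell_0,r_0]$, and for any other minimal $M'\in\mathcal{F}$ observe $M\wedge M'\in\mathcal{F}$ with $M\wedge M'\subseteq M$ and $M\wedge M'\subseteq M'$, so minimality gives $M=M\wedge M'=M'$.

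To establish meet-closedness I would first record a structural fact about witnessing traces. Write $k=|P_v|$ and let $q$ be the position of the root of $\CT(P_v)$ inside $P_v$, i.e. $P_v[q]=\min(P_v)$, so $\CT(P_{v.\LC})=\CT(P_v[1..q-1])$ and $\CT(P_{v.\RC})=\CT(P_v[q+1..k])$. For any trace $I=(i_1,\dots,i_k)\in\SUBIDX{n}{k}$ witnessing $[\ell,r]\in\mathcal{F}$: since $\CT(T_I)=\CT(P_v)$ by (iii) and ordered-tree isomorphism preserves the size of the left subtree, the root of $\CT(T_I)$ sits at position $q$, hence $T[i_q]=\min(T_I)$; combined with (iv) and the distinctness of the characters of $T$, this yields $i_q=i$. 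Consequently $i_1<\cdots<i_{q-1}<i<i_{q+1}<\cdots<i_k$, every $T[i_s]$ with $s\neq q$ exceeds $T[i]$, the prefix $(i_1,\dots,i_{q-1})$ realizes $\CT(P_{v.\LC})$ inside $[1,i-1]$, and the suffix $(i_{q+1},\dots,i_k)$ realizes $\CT(P_{v.\RC})$ inside $[i+1,n]$ (the cases $q=1$ and $q=k$ being the degenerate ones where one side is empty).

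Next, given witnessing traces $I^{(1)}$ for $[\ell_1,r_1]$ and $I^{(2)}$ for $[\ell_2,r_2]$, I would splice them at the pivot: put $J:=(i^{(1)}_1,\dots,i^{(1)}_{q-1},\,i,\,i^{(2)}_{q+1},\dots,i^{(2)}_k)$. By the structural fact $J$ is strictly increasing and $T[i]$ is still the strict minimum of $T_J$, so the root of $\CT(T_J)$ is again at position $q$ with left subtree $\CT(P_{v.\LC})$ (inherited from $I^{(1)}$) and right subtree $\CT(P_{v.\RC})$ (inherited from $I^{(2)}$); thus $\CT(T_J)=\CT(P_v)$ and $J$ satisfies (i)--(iv). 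Its endpoints are $i^{(1)}_1\ge\ell_1$ and $i^{(2)}_k\le r_2$, so $[\ell_1,r_2]\in\mathcal{F}$; splicing the other way gives $[\ell_2,r_1]\in\mathcal{F}$. A $2\times 2$ case split on whether $\ell_1\le\ell_2$ and whether $r_1\le r_2$ shows that $[\max(\ell_1,\ell_2),\min(r_1,r_2)]$ is always one of $[\ell_1,r_1],[\ell_1,r_2],[\ell_2,r_1],[\ell_2,r_2]$, each of which lies in $\mathcal{F}$; hence $\mathcal{F}$ is meet-closed.

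I expect the main obstacle to be the splicing-correctness step, namely verifying $\CT(T_J)=\CT(P_v)$ after gluing the left half of one trace to the right half of another. The key point is that the gluing leaves $T[i]$ untouched, and $T[i]$ is still the global minimum of the spliced subsequence, so $\CT(T_J)$ decomposes at position $q$ exactly as $\CT(P_v)$ does; the two halves were already correct and occupy the disjoint index ranges $[1,i-1]$ and $[i+1,n]$, so no new interaction is created. Making the ``the root stays at position $q$'' claim precise uses only that isomorphic ordered binary trees have left subtrees of equal size, which is the one place where the ordered (parent- and sibling-order preserving) nature of the isomorphism is used.
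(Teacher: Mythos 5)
Your proof is correct and rests on exactly the same key idea as the paper's: splicing the left half of one witnessing trace with the right half of another at the pivot, which works because the pivot is the minimum of both subsequences and (by ordered-tree isomorphism) occupies the root position in both, so the two halves decompose independently. Your packaging as meet-closedness of the family of fixed-intervals followed by a lattice-style uniqueness argument, rather than the paper's direct contradiction between two alleged minimal intervals, is only a cosmetic reorganization of the same argument (and your explicit justification that the pivot sits at position $q$ in every trace is a welcome elaboration of a step the paper states tersely).
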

\begin{proof}
Assume that there are two minimal fixed-intervals with the pivot $(v, i)$.
Let $[\ell, r]$ and $[\ell^{\prime}, r^{\prime}]$ be two such distinct intervals.
Without loss of generality, assume $\ell\leq \ell^{\prime}$.
Then, by the minimalities of $[\ell, r]$ and $[\ell', r']$, $\ell < \ell'$ and $r < r^{\prime}$ must hold.
From Definition~\ref{def:kukan}, there exist
$I = (\ell, \ldots, i, \ldots, r)$ and $I^{\prime} = (\ell^{\prime}, \ldots, i, \ldots, r^{\prime})$
such that $\CT(T_I) = \CT(T_{I^{\prime}}) = \CT(P_v)$ and $T[i] = \min(T_I) = \min(T_{I^\prime})$.
Since $\CT(T_I) = \CT(T_{I^{\prime}})$ and $T[i] = \min(T_I) =\min(T_{I^{\prime}})$, the right subtree of $i$ in $\CT(T_I)$ is the same as that of $\CT(T_{I^{\prime}})$.
Namely, $\CT(T_{I[i+1:r]}) = \CT(T_{I^{\prime}[i+1:r^{\prime}]})$ holds.
Thus, we have $\CT(T_{I''}) = \CT(P_v)$ where $I''$ is the subscript sequence of length $|I|$
that is the concatenation of $I[\ell^{\prime}:i]$ and $I'[i+1:r]$.
Also, $i \in I''$ and $T[i] = \min(T_{I''})$ hold,
and hence, $[\ell^{\prime}, r]$ is a fixed-interval with the pivot $(v, i)$.
This contradicts that $[\ell^{\prime}, r^{\prime}]$ is a minimal fixed-interval.
\end{proof}
For convenience,
we define the minimal fixed-interval with the pivot $(v, i)$ as $[-\infty, \infty]$ if there is no fixed-interval with the pivot $(v, i)$.
We denote by $\mfi(v, i)$ the minimal fixed-interval with the pivot $(v, i)$.
Let $\mathcal{M} = \{\mfi(\minidx(P), i) \mid i\in[n]\}$ be the set of all the minimal fixed-intervals for the root of $\CT(P)$.
By the definitions of minimal occurrence intervals and minimal fixed-intervals, the next corollary holds:
\begin{corollary}
\label{cor:minimal-interval}
For any minimal occurrence interval $[\ell, r]$ for $P$ over $T$,
$[\ell, r] \in \mathcal{M}$ holds.
Contrary, for any interval $[\ell, r] \in \mathcal{M}$,
if there is no interval $[\ell', r'] \subsetneq [\ell, r]$ such that $[\ell', r'] \in \mathcal{M}$,
$[\ell, r]$ is a minimal occurrence interval for $P$ over $T$.
\end{corollary}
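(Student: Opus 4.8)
The plan is to prove the two implications separately. The bridge between the occurrence-interval and fixed-interval vocabularies is the elementary observation that, for $v := \minidx(P)$ (the root of $\CT(P)$), we have $P_v = P[1..m] = P$; consequently a fixed-interval with a pivot of the form $(v,i)$ is, by Definition~\ref{def:kukan}(iii), nothing but an occurrence interval for $P$ over $T$ that additionally remembers a witnessing trace whose root maps to the text index $i$. Throughout, Lemma~\ref{lem:atmost} guarantees that $\mfi(v,i)$ is a well-defined single interval, so that membership in $\mathcal{M}$ is meaningful.

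First I would handle the forward direction: let $[\ell,r]$ be a minimal occurrence interval and fix a trace $I=(i_1,\dots,i_m)$ with $\CT(T_I)=\CT(P)$ and $[i_1,i_m]\subseteq[\ell,r]$. The first step is to observe that $I$ is automatically \emph{tight}, i.e.\ $i_1=\ell$ and $i_m=r$, since otherwise $[i_1,i_m]$ would be a strictly smaller occurrence interval. Next, because the in-order traversal of a Cartesian tree on node set $[m]$ is exactly $1,2,\dots,m$, the ordered-tree isomorphism $\CT(T_I)=\CT(P)$ must be the identity map on $[m]$; hence the root of $\CT(T_I)$ sits at in-order position $v$, which corresponds to the text index $i:=i_v$ with $T[i]=\min(T_I)$. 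Then $I$ witnesses, via conditions (i)--(iv) of Definition~\ref{def:kukan} (using $P_v=P$), that $[\ell,r]$ is a fixed-interval with pivot $(v,i)$. Finally I would upgrade this to \emph{minimality}: if some fixed-interval $[\ell',r']\subsetneq[\ell,r]$ with the same pivot existed, then---again because $v$ is the root---$[\ell',r']$ would be a strictly smaller occurrence interval, contradicting the minimality of $[\ell,r]$. Hence $[\ell,r]=\mfi(v,i)\in\mathcal{M}$.

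For the converse, take $[\ell,r]\in\mathcal{M}$, say $[\ell,r]=\mfi(v,i)$ with $v=\minidx(P)$; the sentinel value $[-\infty,\infty]$ may be discarded, as it is not an interval of $[n]$ and, if it is the only element of $\mathcal{M}$, then $P$ does not match $T$ at all and there is nothing to prove. Since $\mfi(v,i)$ is a fixed-interval with a root pivot, its defining trace shows $P=P_v\subseq T[\ell..r]$, so $[\ell,r]$ is an occurrence interval. To see it is minimal, suppose some occurrence interval is strictly contained in $[\ell,r]$; choosing a $\subseteq$-minimal element among the finitely many occurrence intervals contained in it yields a genuine minimal occurrence interval $[\ell'',r'']\subsetneq[\ell,r]$, which by the forward direction lies in $\mathcal{M}$---contradicting the assumption that no element of $\mathcal{M}$ is strictly inside $[\ell,r]$.

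I expect the only real subtlety to be the bookkeeping in the second paragraph: rigorously arguing that the isomorphism $\CT(T_I)=\CT(P)$ pins the root to position $v$ (so that the pivot is genuinely at the root, where $P_v=P$), and that the minimality of $[\ell,r]$ as an occurrence interval actually forces $[\ell,r]$ to be the \emph{minimal} fixed-interval for $(v,i)$, not merely some fixed-interval. Everything else is a routine translation between the two vocabularies, leaning on Lemma~\ref{lem:atmost} only for the well-definedness of $\mfi(v,i)$.
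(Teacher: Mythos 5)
Your proof is correct and fills in precisely the routine translation that the paper leaves implicit (the corollary is stated there as immediate from the definitions, with no written proof): you identify the root pivot via the in-order/identity isomorphism so that a tight witnessing trace makes a minimal occurrence interval a minimal fixed-interval $\mfi(\minidx(P),i)\in\mathcal{M}$, and for the converse you use finiteness to extract a minimal occurrence interval inside any smaller occurrence interval and feed it back through the forward direction. This is essentially the intended argument, and your handling of the sentinel $[-\infty,\infty]$ and of the well-definedness of $\mfi$ via Lemma~\ref{lem:atmost} is sound.
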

Note that not every intervals $[\ell, r] \in \mathcal{M}$ is a minimal occurrence interval for $P$ over $T$.
We show an example of a interval $[\ell, r] \in \mathcal{M}$ such that $[\ell, r]$ is not a solution of {\CTSM} in Figure~\ref{fig:not-min-range}.

\begin{figure}[tb]
\begin{center}
\includegraphics[scale=0.54]{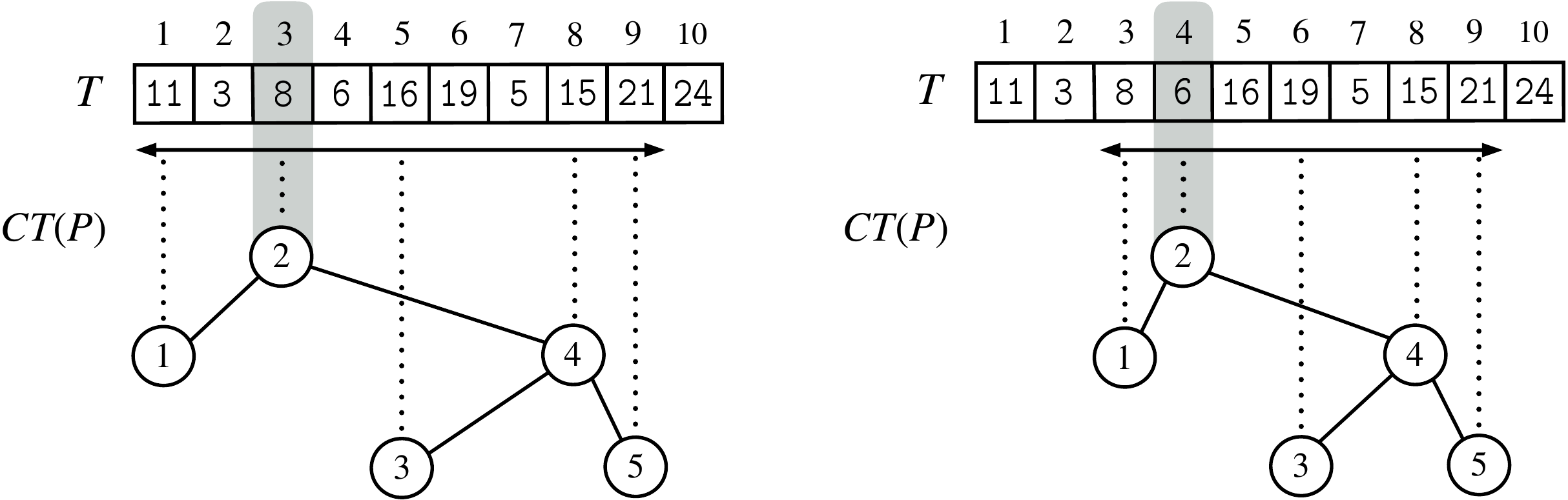}
\end{center}
\caption{
  Illustration for two minimal fixed-intervals, where $T$ and $P$ are the same as in Figure~\ref{fig:min-range}.
  From the figure, $\mfi(2, 3) = [1, 9]$ and $\mfi(2, 4) = [3, 9]$ hold.
  Note that $\mfi(2, 3) = [1, 9]$ is not a solution of {\CTSM} since $\mfi(2, 4) \subsetneq \mfi(2, 3)$ holds.
}
\label{fig:not-min-range}
\end{figure}

\subsection{The Algorithm}
From Corollary~\ref{cor:minimal-interval},
once we compute the set $\mathcal{M}$ of intervals,
we can obtain the solution of {\CTSM} by removing non-minimal intervals from $\mathcal{M}$.
Since every interval in $\mathcal{M}$ except $[-\infty, \infty]$ is a sub-interval of $[1, n]$, we can sort them in $O(n)$ time by using bucket sort, and thus, can also remove non-minimal intervals.

Thus, in what follows, we discuss how to efficiently compute $\mathcal{M}$, i.e., $\mfi(\minidx(P), i)$ for all $i \in [n]$.
Now, we define two functions $L(v, i) = \ell$ and $R(v, i) = r$
for each node $v \in [m]$ in $\CT(P)$ and each index $i \in [n]$,
where $[\ell, r] = \mfi(v, i)$.
Then, our task is, to compute $L(\minidx(P), i)$ and $R(\minidx(P), i)$ for all $i \in [n]$.
Regarding the two functions, we show the following lemma~(see also Figure~\ref{fig:dp} for illustration):
\begin{lemma}\label{lem:recursion}
For any pivot $(v, i) \in [m] \times [n]$,
the following recurrence relations hold:
\begin{equation*}
\label{ali:l-step}
L(v, i) = \left\{
\begin{array}{ll}
    -\infty & \text{\rm{if}~}\mfi(v, i) = [-\infty, \infty],\\
    i & \text{\rm{if}} \ \mfi(v, i) \ne [-\infty, \infty]\\
    & \quad\text{\rm{and} }v.\LC = \nil,\\
    \displaystyle \max_{\substack{1 \leq j \leq i - 1}}{\{L(v.\LC,j) \mid T[i] < T[j], R(v.\LC,j) < i\}} & \text{\rm{otherwise}}.
\end{array}
\right.
\end{equation*}
\begin{equation*}
\label{ali:r-step}
R(v, i) = \left\{
\begin{array}{ll}
    \infty & \text{\rm{if}~}\mfi(v, i) = [-\infty, \infty],\\
i & \text{\rm{if}} \ \mfi(v, i) \ne [-\infty, \infty]\\
    & \quad\text{\rm{and} }v.\RC = \nil,\\
    \displaystyle \min_{\substack{i + 1 \leq j \leq n}} \{R(v.\RC, j) \mid T[i] < T[j], i < L(v.\RC, j) \} & \text{\rm{otherwise}}.
\end{array}
\right.
\end{equation*}
\end{lemma}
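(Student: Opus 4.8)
The plan is to prove both recurrences at once by analysing how a trace for a pivot decomposes at its root, after first isolating two structural facts. Fix a pivot $(v,i)\in[m]\times[n]$ and call a subscript sequence $I=(i_1,\dots,i_{|P_v|})\in\SUBIDX{n}{|P_v|}$ a \emph{witness for $(v,i)$} if it satisfies conditions (i),(iii),(iv) of Definition~\ref{def:kukan}, i.e.\ $i\in I$, $\CT(T_I)=\CT(P_v)$ and $T[i]=\min(T_I)$; by Definition~\ref{def:kukan} an interval is a fixed-interval with pivot $(v,i)$ iff it contains $[i_1,i_{|P_v|}]$ for some witness $I$. The first fact I would record is a \emph{root decomposition}: since the in-order traversal of a Cartesian tree lists its node indices in increasing order (immediate from the recursive definition) and all characters are distinct, in any witness $I$ the index $i$ (having the minimum $T$-value) is the root of $\CT(T_I)$, the prefix $J$ of $I$ before $i$ satisfies $\CT(T_J)=\CT(P_{v.\LC})$ with every $T$-value of $J$ exceeding $T[i]$, the suffix $K$ after $i$ satisfies $\CT(T_K)=\CT(P_{v.\RC})$ with every $T$-value exceeding $T[i]$, and $\max(J)<i<\min(K)$. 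Conversely, any such $J$ and $K$ can be \emph{spliced}: $J\cdot(i)\cdot K$ is again a witness for $(v,i)$ (its indices are strictly increasing, its minimum is $T[i]$, and its Cartesian tree is the root $i$ with subtrees $\CT(P_{v.\LC})$ and $\CT(P_{v.\RC})$, hence $\CT(P_v)$).

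The second fact is an \emph{extremal characterisation} of $\mfi$: if a witness for $(v,i)$ exists then $\mfi(v,i)=[\ell,r]$ is finite with $\ell=\max_I i_1$ and $r=\min_I i_{|P_v|}$ over all witnesses $I$, and these two optima are attained by a single witness. I would prove this from the minimality in Definition~\ref{def:kukan}: since $[\ell,r]$ is a fixed-interval it contains $[i_1,i_{|P_v|}]$ for some witness $I$, and $[i_1,i_{|P_v|}]$ is itself a fixed-interval, so minimality forces $[\ell,r]=[i_1,i_{|P_v|}]$; if some witness $I'$ had first index larger than $\ell$, splicing its prefix onto $I$'s suffix would yield a fixed-interval strictly inside $[\ell,r]$, a contradiction, and symmetrically on the right. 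Hence $L(v,i)=\max_I i_1$, $R(v,i)=\min_I i_{|P_v|}$, both realised by this extremal witness.

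With these in hand the recurrences follow by a case split; I describe the argument for $L$, that for $R$ being symmetric. If $\mfi(v,i)=[-\infty,\infty]$ then $L(v,i)=-\infty$ by convention (first line). If $\mfi(v,i)$ is finite and $v.\LC=\nil$, then $P_{v.\LC}=\varepsilon$, so the prefix $J$ of every witness is empty and $i_1=i$ always, whence $L(v,i)=i$ (second line). In the remaining case, for ``$\le$'' I take the extremal witness $I=J\cdot(i)\cdot K$ with $i_1=L(v,i)$, set $j$ to be the index of $J$ with $T[j]=\min(T_J)$; then $J$ is a witness for $(v.\LC,j)$, and $1\le j\le i-1$, $T[i]<T[j]$, $R(v.\LC,j)\le\max(J)<i$, while $L(v.\LC,j)\ge\min(J)=L(v,i)$, so $L(v,i)$ is at most the claimed maximum. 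For ``$\ge$'', I take a $j$ attaining that maximum, pick the extremal witness $J^{*}$ for $(v.\LC,j)$ (so $\min(J^{*})=L(v.\LC,j)$ and $\max(J^{*})=R(v.\LC,j)<i$), pick any witness $J_0\cdot(i)\cdot K_0$ for $(v,i)$ (one exists as $\mfi(v,i)$ is finite), and splice to get the witness $J^{*}\cdot(i)\cdot K_0$ whose first index is $L(v.\LC,j)$; hence $L(v,i)\ge L(v.\LC,j)$. Combining the two inequalities gives the third line.

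I expect the main obstacle to be the extremal characterisation: showing that the two endpoints of $\mfi(v,i)$ are \emph{simultaneously} optimal needs the splicing to improve one endpoint without disturbing the other, and the cleanest route seems to be to invoke the minimality of $\mfi(v,i)$ directly, reusing the argument pattern of Lemma~\ref{lem:atmost}, rather than re-deriving uniqueness. The remaining points to be careful about are routine: the degenerate subtrees ($v.\LC=\nil$ or $v.\RC=\nil$), the convention that an empty $\max$/$\min$ collapses to $\mp\infty$ so that the ``otherwise'' line also covers $\mfi(v,i)=[-\infty,\infty]$, and verifying that each spliced sequence is genuinely strictly increasing with $T[i]$ as its minimum.
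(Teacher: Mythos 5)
Your proposal is correct and takes essentially the same route as the paper's proof: decompose a witness spanning $\mfi(v,i)$ at the pivot to get the ``$\le$'' direction, and splice a witness for the maximizing child pivot with the part of a witness for $(v,i)$ from $i$ onward to contradict minimality, giving ``$\ge$''. The only real difference is presentational: you isolate as an explicit ``extremal characterisation'' of $\mfi$ (endpoints are the max of witness start indices and min of witness end indices) what the paper uses implicitly when it asserts $R(v.\LC,k)<i$ and $L(v.\LC,k)\ge\ell$ for the decomposed prefix, which is a sensible way to make that step airtight.
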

\begin{figure}[tb]
\begin{center}
\includegraphics[scale=0.55]{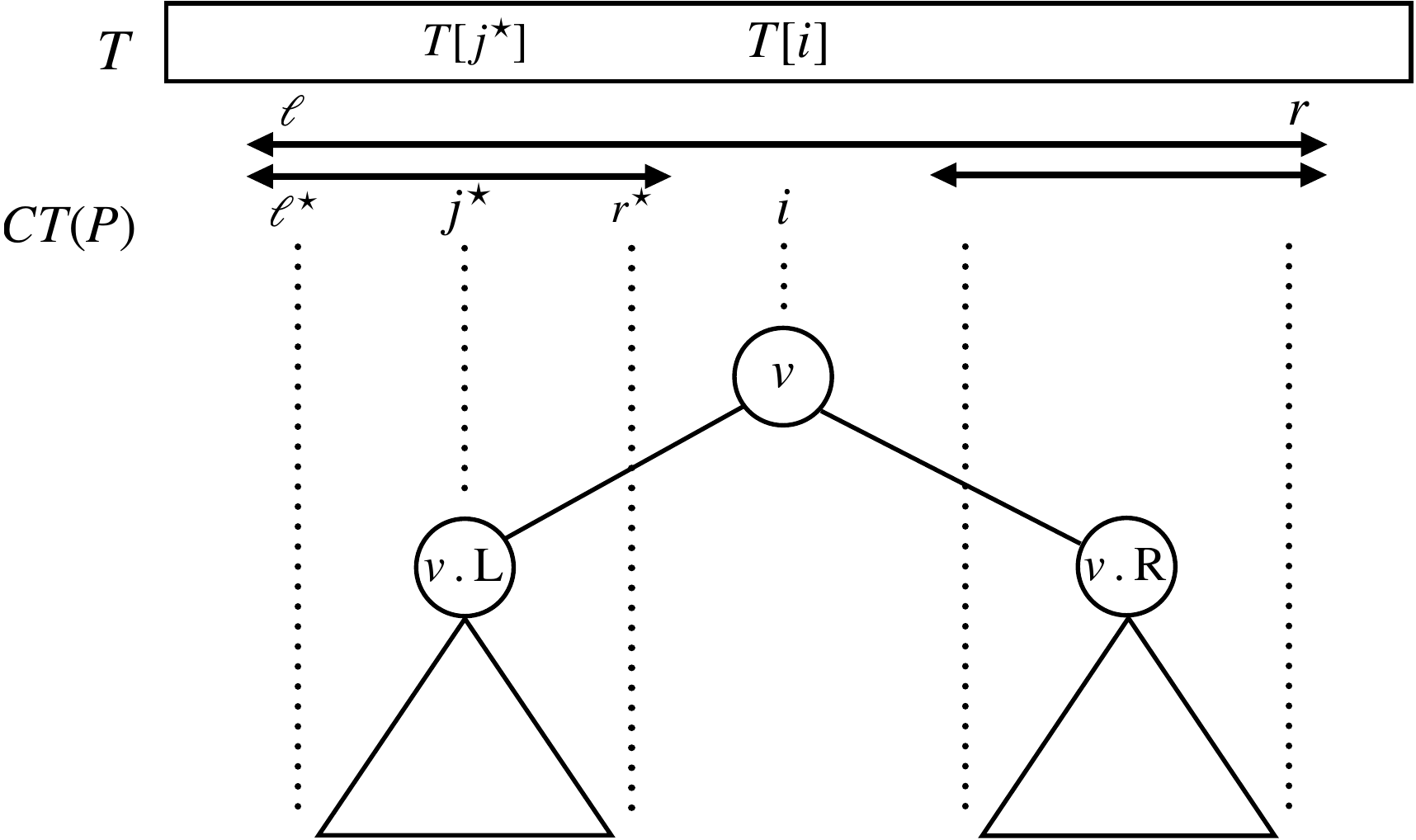}
\end{center}
\caption{
Illustration for intuitive understanding of recurrence relations in Lemma~\ref{lem:recursion}.
The minimal fixed-interval $[\ell, r]$ with the pivot $(v, i)$ can be obtained from $\mfi(v.\LC, j)$ and $\mfi(v.\RC, k)$ by choosing $j$ and $k$ appropriately.
  As for the left subtree of $v$, the candidates for such $j$ must satisfy the conditions that the right-end of $\mfi(v.\LC, j)$ does not exceed $i-1$ and $T[j] > T[i]$.
  To minimize the width of fixed-intervals with $(v, i)$, we choose $j^\star$ that maximizes the left-end of $\mfi(v.\LC, j^\star)$ while satisfying the above conditions.
  Also, symmetric arguments can be applied to the right subtree of $v$.
}
\label{fig:dp}
\end{figure}

\begin{proof}
We prove the validity of the first equation for $L(v, i)$.
The second one can be proven by symmetric arguments.
The first two cases are clearly correct by the definition of minimal fixed-intervals.
We focus on the third case, when $\mfi(v, i) \ne [-\infty, \infty]$ and $v.\LC \ne \nil$.

Let $[\ell, r] = \mfi(v, i)$.
By Definition~\ref{def:kukan}, there exists $I = (\ell, \ldots, i, \ldots, r)$ such that
    $CT(T_I) = CT(P_v)$ and $T[i] = \min(T_I)$.
    We notice that $\CT(P_{v.\LC}) = \CT(T_{I[\ell:\mathit{pre}_i]})$ holds
    where $\mathit{pre}_i$ is the subscript preceding $i$ in $I$.
    Thus, there exists $k$ such that
    $\ell \le k \le i-1$, $T[i] < T[k]$, $R(v.\LC, k) < i$,
    and $L(v.\LC, k) \ge \ell$.
Now, let $j^\star := \argmax_{1 \leq j \leq i - 1} \{L(v.\LC, j) \mid T[i] < T[j], R(v.\LC, j) < i\}$ and
    $[\ell^\star, r^\star] := \mfi({v.\LC}, j^\star)$.
    Then, $\ell^\star = L(v.\LC, j^\star) \ge L(v.\LC, k) \ge \ell$ holds.

For the sake of contradiction, we assume $\ell < \ell^\star$.
    By Definition~\ref{def:kukan},
    there exists $I^\star = (\ell^\star, \ldots, j^\star, \ldots, r^\star)$ such that
    $CT(T_{I^\star}) = CT(P_{v.\LC})$ and $T[j^\star] = \min(T_{I^\star})$.
    Also, by the definition of $j^\star$, $T[i] < T[j^\star]$ and $r^\star < i$ hold.
    Let $I^{\prime}$ be the concatenation of $I^\star$ and $I[i:r]$.
    Note that $I^{\prime} \in \SUBIDX nm$ since $r^\star < i$.
From the above discussions, $\CT(T_{I^\prime}) = \CT(P_v)$ holds
    since $\CT(T_{I^\star}) = \CT(P_{v.\LC})$ and $\min(T_{I^\star}) = T[j^\star] > T[i]$.
Also, $i \in I^\prime$ and $T[i] = \min(T_{I'})$ clearly hold.
Then, by Definition~\ref{def:kukan}, $[\ell^\star, r] \subsetneq [\ell, r]$ is a fixed-interval with $(v, i)$,
    however, this contradicts the minimality of $[\ell, r] = \mfi(v, i)$.
    Therefore, $\ell = \ell^\star$ holds.
    Namely, $L(v, i) = L(v.\LC, j^\star) = \max_{1 \leq j \leq i-1}\{L(v.\LC, j) \mid T[i] < T[j], R(v.\LC, j) < i\}$ holds.
\end{proof}
Algorithm~\ref{alg:dp} is a pseudo code of our algorithm to solve {\CTSM} using dynamic programming based on Lemma~\ref{lem:recursion}.
\begin{algorithm}[tb]
  \caption{Algorithm for solving {\CTSM} using dynamic programming}\label{alg:dp}
  \begin{algorithmic}[1]
\Procedure{cartesian-tree-subsequence-match}{$T[1..n], P[1..m]$}
    \State $L[v][i] \leftarrow -\infty$ {\bf for} all $v \in [m]$ and $i \in [n]$
    \State $R[v][i] \leftarrow \infty$ {\bf for} all $v \in [m]$ and $i \in [n]$
    \State $C \leftarrow \CT(P)$
    \For{each $v \in [m]$ in a bottom-up manner in $C$}
\State call UPDATE-LEFT-MAX($v, T, L, R$)
    \State call UPDATE-RIGHT-MIN($v, T, L, R$)
    \EndFor
    \State enumerate all minimal occurrence intervals for $P$ over $T$ by using $L$ and $R$.
\EndProcedure 
\Function{update-left-max}{$v, T, L, R$}
\If{$v.\LC = \nil$}
    \State $L[v][i] \leftarrow i$ {\bf for} all $i \in [n]$
    \State \Return
    \EndIf
    \For{$i \leftarrow 1$ \textbf{to} $n$}
    \For{$j \leftarrow 1$ \textbf{to} $i - 1$}
    \If{$T[i] < T[j]$ \textbf{and} $R[v.\LC][j] < i$}
    \State $L[v][i] \leftarrow \displaystyle \max (L[v][i], L[v.\LC][j])$
    \EndIf
    \EndFor
    \EndFor
    \EndFunction
    
    \Function{update-right-min}{$v, T, L, R$}
\If{$v.\RC = \nil$}
    \State $R[v][i] \leftarrow i$ {\bf for} all $i \in [n]$
    \State \Return
    \EndIf
    \For{$i \leftarrow 1$ \textbf{to} $n$}
    \For{$j \leftarrow i + 1$ \textbf{to} $n$}
    \If{$T[i] < T[j]$ \textbf{and} $i < L[v.\RC][j]$}
    \State $R[v][i] \leftarrow \displaystyle \min (R[v][i], R[v.\RC][j])$
    \EndIf
    \EndFor
    \EndFor
    \EndFunction
  \end{algorithmic}
\end{algorithm}

\subsubsection*{Correctness of Algorithm~\ref{alg:dp}.}
Algorithm~\ref{alg:dp} computes tables $L[v][i] = L(v, i)$ and $R[v][i] = R(v, i)$ for all pivot $(v, i) \in [m] \times [n]$ in a \emph{bottom-up} manner in $\CT(P)$~(see Line~$5$).
Since the recursion formulae of Lemma~\ref{lem:recursion} hold for every node,
Algorithm~\ref{alg:dp} correctly computes all the minimal fixed-intervals, and thus,
all the minimal occurrence intervals for pattern $P$ over text $T$.

\subsubsection*{Time and Space Complexities of Algorithm~\ref{alg:dp}.}
At Line $4$, we build the Cartesian tree $C$ of a given pattern $P$.
There is a linear-time algorithm to build a Cartesian tree~\cite{gabow1984scaling}, which takes $O(m)$ time here.
In Lines~5--7, we call functions UPDATE-LEFT-MAX and UPDATE-RIGHT-MIN $m$ times since $C$ has $m$ nodes.
It is clear that the functions UPDATE-LEFT-MAX and UPDATE-RIGHT-MIN run in $O(n^2)$ time for each call.
Thus, the total running time of Algorithm~\ref{alg:dp} is $O(mn^2)$.
Also, the space complexity of Algorithm~\ref{alg:dp} is $O(mn)$,
which is dominated by the size of tables $L$ and $R$.

To summarize, we obtain the following theorem:
\begin{theorem}
\label{the:dp}
  The {\CTSM} problem can be solved in $O(mn^2)$ time using $O(mn)$ space.
\end{theorem}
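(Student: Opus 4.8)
The plan is to verify the correctness of Algorithm~\ref{alg:dp} and then bound its time and space. For correctness, I would first note that the loop at Line~5 visits the nodes of $C = \CT(P)$ in a bottom-up (reverse topological) order, so that whenever a node $v$ is processed, both of its children---if present---have already been processed and their table entries are final. One then checks, by a case analysis matching the three cases of the first recurrence in Lemma~\ref{lem:recursion}, that UPDATE-LEFT-MAX leaves $L[v][i]$ equal to $L(v,i)$: the branch $v.\LC = \nil$ assigns $L[v][i] = i$; otherwise the double loop over $1 \le j \le i-1$ takes the maximum of $L[v.\LC][j]$ over all $j$ with $T[i] < T[j]$ and $R[v.\LC][j] < i$, and leaves the initial value $-\infty$ exactly when no such $j$ exists, which by Lemma~\ref{lem:recursion} is precisely the case $\mfi(v,i) = [-\infty,\infty]$. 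A symmetric argument handles UPDATE-RIGHT-MIN and $R(v,i)$. By induction along the bottom-up order, $L[v][i] = L(v,i)$ and $R[v][i] = R(v,i)$ for every pivot $(v,i) \in [m] \times [n]$; in particular, for the root $\rho := \minidx(P)$ of $\CT(P)$ we obtain $[L[\rho][i], R[\rho][i]] = \mfi(\rho,i)$ for all $i$, i.e.\ the set $\mathcal{M}$.

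Next I would invoke Corollary~\ref{cor:minimal-interval}: every minimal occurrence interval for $P$ over $T$ lies in $\mathcal{M}$, and every element of $\mathcal{M}$ having no proper sub-interval in $\mathcal{M}$ is a minimal occurrence interval; since each finite interval in $\mathcal{M}$ is itself an occurrence interval, the minimal occurrence intervals are exactly the inclusion-minimal finite intervals of $\mathcal{M}$. As $\mathcal{M}$ has at most $n$ elements, all with endpoints in $[1,n]$ once the sentinel $[-\infty,\infty]$ is discarded, Line~9 can bucket/radix-sort them by their endpoints and then make a single linear scan to delete any interval that properly contains another, all in $O(n)$ time.

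For the complexity: Line~4 constructs $\CT(P)$ in $O(m)$ time via the linear-time Cartesian-tree algorithm of~\cite{gabow1984scaling}. Lines~5--7 perform $m$ calls to UPDATE-LEFT-MAX and $m$ calls to UPDATE-RIGHT-MIN, one pair per node of $C$; each call runs a nested loop over $O(n^2)$ ordered pairs of indices with $O(1)$ work per iteration, hence $O(n^2)$ per call and $O(mn^2)$ in total, which dominates both the $O(m)$ construction and the $O(n)$ extraction. The space is dominated by the tables $L$ and $R$, each of size $O(mn)$ words; $\CT(P)$ and the sorting step need only $O(m)$ and $O(n)$ additional words. This gives $O(mn^2)$ time and $O(mn)$ space. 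The only subtle point is ensuring the node order at Line~5 is a genuine reverse topological order so that no child entry is read before it is written, and that the sentinel convention is propagated consistently---which it is, since the $\max$/$\min$ in Lemma~\ref{lem:recursion} ranges over indices in $[n]$ and the base cases assign values in $[n]$, so finite intervals never escape $[1,n]$. Everything else is a routine transcription of Lemma~\ref{lem:recursion} and Corollary~\ref{cor:minimal-interval} into the pseudocode together with an elementary cost count.
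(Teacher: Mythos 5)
Your proposal is correct and follows essentially the same route as the paper: compute $L$ and $R$ bottom-up over $\CT(P)$ by transcribing the recurrences of Lemma~\ref{lem:recursion}, extract the answer via Corollary~\ref{cor:minimal-interval} with a linear-time bucket sort, and charge $O(n^2)$ per node for the two update functions plus $O(m)$ for building $\CT(P)$, giving $O(mn^2)$ time and $O(mn)$ space for the tables. The only quibble is your side remark that $L[v][i]=-\infty$ after the loop ``precisely'' when $\mfi(v,i)=[-\infty,\infty]$ (the interval can also be $[-\infty,\infty]$ because the right subtree fails while eligible $j$ exist on the left, in which case $R[v][i]=\infty$ renders the entry harmless), but the paper's own correctness paragraph operates at the same level of detail.
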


With a few modifications,
we can reconstruct a trace $I = (\ell, \ldots, r) \in \SUBIDX nm$ satisfying $CT(T_I) = CT(P)$ for each minimal occurrence interval $[\ell, r]$.
Precisely, when we compute
the minimal fixed-interval with each pivot $(v, i)$,
we simultaneously compute and store which index will correspond to the root of the left subtree of $v$ fixed at $i$.
We do the same for the right subtree. Using the additional information,
we can reconstruct a desired subscript sequence
by tracing back from the root of $\CT(P)$.
The next corollary follows from the above discussion:
\begin{corollary}
\label{cor:trace}
Once we compute $L(v, i)$ and $R(v, i)$ extended with the information of tracing back for all pivots $(v, i) \in [m] \times [n]$,
we can find a trace $I = (\ell, \ldots, r)$ satisfying $CT(T_I) = CT(P)$ for each minimal occurrence interval $[\ell, r]$ for $P$ over $T$ in $O(m)$ time using $O(mn)$ space.
\end{corollary}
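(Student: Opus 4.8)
The plan is to augment the dynamic programming of Algorithm~\ref{alg:dp} with two \emph{witness} tables and then to produce a trace for a given minimal occurrence interval by a single top-down walk of $\CT(P)$. Concretely, alongside $L(v,i)$ I would store a left witness $\lambda(v,i)$: when $\mfi(v,i)\neq[-\infty,\infty]$ and $v.\LC\neq\nil$, set $\lambda(v,i)$ to an index $j$ attaining the maximum in the third case of the recurrence for $L(v,i)$ in Lemma~\ref{lem:recursion}; otherwise set $\lambda(v,i)=\nil$. Symmetrically, alongside $R(v,i)$ store a right witness $\rho(v,i)$ recording an index attaining the corresponding minimum for $R(v,i)$. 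As already observed in the proof of Lemma~\ref{lem:recursion}, whenever $v.\LC\neq\nil$ and $\mfi(v,i)\neq[-\infty,\infty]$ the set over which that maximum is taken is nonempty, so $\lambda(v,i)$ is well defined; likewise for $\rho(v,i)$. Both witnesses are computed with $O(1)$ extra work per entry inside UPDATE-LEFT-MAX and UPDATE-RIGHT-MIN, and the two tables add $O(mn)$ space, matching Theorem~\ref{the:dp}.

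Given a minimal occurrence interval $[\ell,r]$, by Corollary~\ref{cor:minimal-interval} we have $[\ell,r]\in\mathcal{M}$, so there is a pivot $(\minidx(P),i_0)$ with $\mfi(\minidx(P),i_0)=[\ell,r]$; while computing $\mathcal{M}$ we record, for each interval it contains, one such $i_0$, so this pivot is available in $O(1)$ time per interval. I would then assign a text index to every node of $\CT(P)$ recursively: the root $\minidx(P)$ is assigned $i_0$; if a node $v$ has been assigned index $i$, then $v.\LC$ (when it is not $\nil$) is assigned $\lambda(v,i)$ and $v.\RC$ (when it is not $\nil$) is assigned $\rho(v,i)$. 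Let $I$ be the sequence of assigned indices listed in the in-order of $\CT(P)$. The claim is that $I\in\SUBIDX{n}{m}$, its first and last elements are $\ell$ and $r$, and $\CT(T_I)=\CT(P)$.

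Establishing this claim is the heart of the argument, and I would prove it by induction on the size of $\CT(P)_v$, carrying the invariant: if $v$ is assigned index $i$, then the indices assigned throughout $\CT(P)_v$ form a strictly increasing subscript sequence $I_v$ with $T[i]=\min(T_{I_v})$, $\CT(T_{I_v})=\CT(P_v)$, and $\mfi(v,i)$ equal to the interval spanned by $I_v$. The base case (both children $\nil$) is immediate. For the inductive step, the defining conditions on $\lambda(v,i)$, namely $T[i]<T[\lambda(v,i)]$ and $R(v.\LC,\lambda(v,i))<i$, together with the induction hypothesis applied to $v.\LC$, place $I_{v.\LC}$ entirely to the left of $i$ with root value exceeding $T[i]$; the symmetric conditions on $\rho(v,i)$ place $I_{v.\RC}$ entirely to the right of $i$ with root value exceeding $T[i]$; and $R(v.\LC,\lambda(v,i))<i<L(v.\RC,\rho(v,i))$ shows the two blocks are disjoint and correctly ordered around $i$. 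Hence $I_v$ is strictly increasing, $T[i]$ is its minimum, and $i$ is the root of $\CT(T_{I_v})$ whose subtrees are $\CT(T_{I_{v.\LC}})=\CT(P_{v.\LC})$ and $\CT(T_{I_{v.\RC}})=\CT(P_{v.\RC})$ by the hypothesis, so $\CT(T_{I_v})=\CT(P_v)$; that the spanned interval equals $\mfi(v,i)$ follows from the same optimality reasoning that proves Lemma~\ref{lem:recursion}. Instantiating the invariant at the root yields the claim.

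The main obstacle is precisely making this induction airtight: one must check that the left-subtree and right-subtree recursions, which are driven by independently stored witnesses, never produce overlapping or misordered index ranges, and that the reconstructed interval is the minimal one rather than merely some occurrence interval; both points reduce to the inequalities carried by the witnesses and to reusing the reasoning behind Lemma~\ref{lem:recursion}, but they need to be written out carefully. Once correctness is in hand the complexity is immediate: the walk visits each of the $m$ nodes of $\CT(P)$ once, doing one witness lookup per node, so a trace for each minimal occurrence interval is obtained in $O(m)$ time, while the augmented tables use $O(mn)$ space.
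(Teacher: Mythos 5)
Your proposal is correct and follows essentially the same route as the paper: the paper also augments the computation of each $\mfi(v,i)$ by storing which index serves as the root of the left (resp.\ right) subtree fixed at $i$, i.e.\ exactly your witnesses $\lambda(v,i)$ and $\rho(v,i)$, and then reconstructs the trace by a single traversal from the root of $\CT(P)$ in $O(m)$ time within $O(mn)$ space. Your inductive invariant simply spells out in more detail the correctness argument the paper leaves implicit via Lemma~\ref{lem:recursion}.
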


 \section{Reducing Time to \texorpdfstring{$O(mn \log \log n)$}{O(mn log log n)} with Predecessor Dictionaries}
\label{ch:algofaster}

This section describes how to improve the time complexity of Algorithm~\ref{alg:dp} to $O(mn \log \log n)$.
In Algorithm~\ref{alg:dp}, functions UPDATE-LEFT-MAX and UPDATE-RIGHT-MIN require $O(n^2)$ time for each call, which is a bottle-neck of Algorithm~\ref{alg:dp}.
By devising the update order of tables $L(v, i)$ and $R(v, i)$ and using a predecessor dictionary, we improve the running time of the above two functions to $O(n\log\log n)$.

\subsection{Main Idea for Reducing Time}
For any pivot $(v, i) \in [m] \times [n]$,
let $\LFI(v, i) = \{[L(v.\LC, j), R(v.\LC, j)] \mid 1\le j \le n, T[i] < T[j]\}$ be a set of intervals which are candidates for a component of the minimal fixed-interval with $(v, i)$.
By Lemma~\ref{lem:recursion},
$L(v, i) = \max (\{\ell \mid [\ell, r] \in \LFI(v, i), r < i\}\cup\{-\infty\})$
holds if $v.\LC \ne \nil$.
Then, the next observations follow by the definitions:
\begin{itemize}
  \item $\LFI(v, i_1) \subseteq \LFI(v, i_2)$ holds for any $i_1, i_2$ with $T[i_1] > T[i_2]$.
  \item If there are intervals $[\ell_1, r_1], [\ell_2, r_2] \in \LFI(v, i)$ such that $\ell_2 = \ell_1 \le r_1 < r_2$, then
  we can always choose $\ell_1$ as $L(v, i)$.
  \item If there are intervals $[\ell_1, r_1], [\ell_2, r_2] \in \LFI(v, i)$ such that $\ell_2 < \ell_1 \le r_1 \le r_2$, then $\ell_2$ is never chosen as $L(v, i)$.
\end{itemize}
The intuitive explanation of the third observation is shown in Figure~\ref{fig:second-observation}.
\begin{figure}[tb]
  \begin{center}
  \includegraphics[scale=0.45]{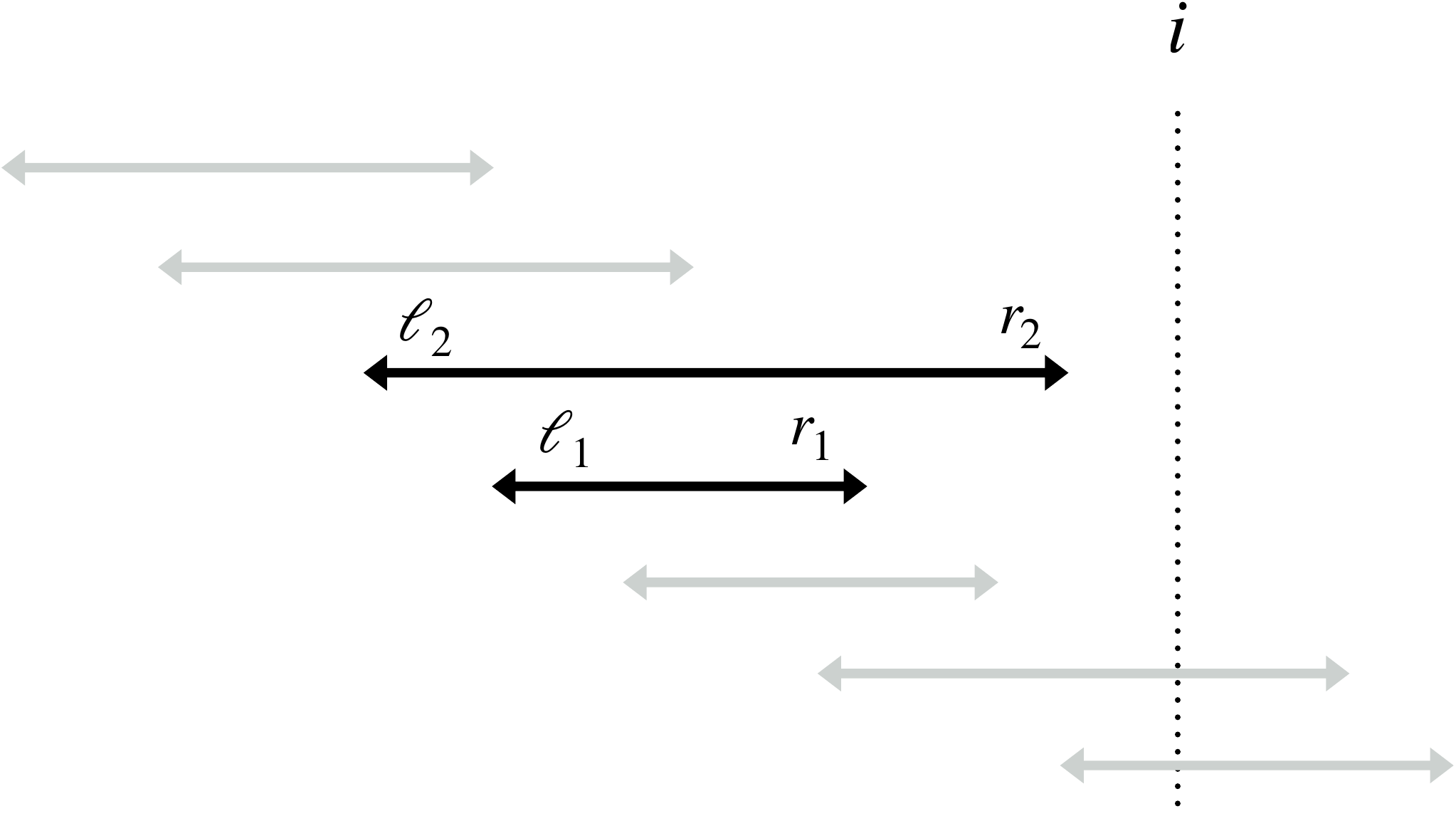}
  \caption{
Illustration for the third observation for $\LFI(v, i)$.
  The double-headed arrows represent the intervals in $\LFI(v, i)$.
  The two intervals $[\ell_1, r_1]$ and $[\ell_2, r_2]$ are in $\LFI(v, i)$ and $[\ell_1, r_1] \subsetneq [\ell_2, r_2]$ holds.
  It is clear that $\ell_2$ is never chosen as $L(v, i)$ for any $i \in [n]$.
  }
  \label{fig:second-observation}
  \end{center}
\end{figure}
From the third observation, we define a subset $\LFI'(v, i)$ of $\LFI(v, i)$, whose conditions are sufficient to our purpose:
Let $\LFI'(v, i)$ be the set of all intervals that are minimal within $\LFI(v, i)$. Namely,
$\LFI'(v, i) = \{[\ell, r] \in \LFI(v, i)\mid$
there is no other interval $[\ell', r'] \in \LFI(v, i)$
such that $[\ell', r']\subsetneq [\ell, r] \}$.
By the third observation,
\begin{equation}\label{eq:L}
  L(v, i) = \max (\{\ell \mid [\ell, r] \in \LFI'(v, i), r < i\}\cup\{-\infty\})
\end{equation}
holds if $v.\LC \ne \nil$.

The main idea of our algorithm is to maintain a set $\mathcal{S}_v$ of intervals so that it satisfies the invariant $\mathcal{S}_v = \LFI'(v, i)$.
To maintain $\mathcal{S}_v$ efficiently, we utilize a data structure called \emph{predecessor dictionary} for $\mathcal{S}_v$ supporting the following operations:
\begin{itemize}
  \item $\texttt{insert}(\mathcal{S}_v, \ell, r)$: insert interval $[\ell, r]$ into $\mathcal{S}_v$,
  \item $\texttt{delete}(\mathcal{S}_v, \ell, r)$: delete interval $[\ell, r]$ from $\mathcal{S}_v$,
  \item $\texttt{pred}(\mathcal{S}_v, x)$: return the interval $[\ell, r] \in \mathcal{S}_v$ on which $r$ is the largest among those satisfying $r < x$ (if it does not exist return $\nil$), and
  \item $\texttt{succ}(\mathcal{S}_v, x)$: return the interval $[\ell, r] \in \mathcal{S}_v$ on which $r$ is the smallest among those satisfying $x < r$ (if it does not exist return $\nil$).
\end{itemize}
To implement a predecessor dictionary for $\mathcal{S}_v$,
we use a famous data structure called \emph{van Emde Boas tree}~\cite{van1977preserving} that performs the operations as mentioned above in $O(\log\log n)$ time each\footnote{The van Emde Boas tree is a data structure for the set of integers, however, it can be easily applied to the set of pairs of integers by associating the first element with the second element.}.
In general, the space usage of van Emde Boas tree is $O(U)$, where $U$ is the maximum of the integers to store.
However, $U = n$ holds in our problem setting, and hence, the space complexity is $O(n)$.

\subsection{Faster Algorithm}

Algorithm~\ref{alg:vEB} shows a function UPDATE-LEFT-MAX that computes $L(v, i)$ for all $i \in [n]$ based on the above idea.
This function can be used to replace the function of the same name in Algorithm~\ref{alg:dp}.
The implementation of function UPDATE-RIGHT-MIN is symmetric.
\begin{algorithm}[tb]
  \caption{Faster algorithm for UPDATE-LEFT-MAX using van Emde Boas tree}\label{alg:vEB}
  \begin{algorithmic}[1]
    \Function{UPDATE-LEFT-MAX}{$v, T, L, R$}  
    \If{$v.\LC = \nil$}
    \State $L[v][i] \leftarrow i$ {\bf for} all $i \in [n]$
    \State \Return
    \EndIf
    \State $\mathcal{S}_v \leftarrow \emptyset$.
    \For{each $i \in [n]$ in the descending order of its value $T[i]$}
    \State $[\ell, r] \leftarrow \texttt{pred}(\mathcal{S}_v, i)$
    \If{$[\ell, r] = \nil$}
    \State $L[v][i] \leftarrow -\infty$
    \State \textbf{continue}
    \EndIf
    \State $L[v][i] \leftarrow \ell$
    \State $\ell_\new \leftarrow L[v.\LC][i]$, $r_\new \leftarrow R[v.\LC][i]$
    \Loop
    \Comment{delete all intervals that become non-minimal}
    \State $[\ell_s, r_s] \leftarrow \texttt{succ}(\mathcal{S}_v, r_\new - 1)$
    \If{$[\ell_s, r_s] = nil$ \textbf{or} $[\ell_\new, r_\new] \not\subseteq [\ell_s, r_s]$}
    \State \textbf{break}
    \EndIf
    \State $\texttt{delete}(\mathcal{S}_v, \ell_s, r_s)$
    \EndLoop
    \State $[\ell_p, r_p] \leftarrow \texttt{pred}(\mathcal{S}_v, r_\new + 1)$
    \If{$[\ell_p, r_p] = \nil$ \textbf{or} $[\ell_p, r_p] \not\subseteq [\ell_\new, r_\new]$}
    \Comment{insert new interval if it is minimal}
    \State $\texttt{insert}(\mathcal{S}_v, \ell_\new, r_\new)$
    \EndIf
    \EndFor
    \EndFunction
  \end{algorithmic}
\end{algorithm}

\subsubsection*{Correctness of Algorithm~\ref{alg:vEB}.}
Remark that $v$ is fixed in Algorithm~\ref{alg:vEB}.
Let $(i_1, \ldots, i_n)$ be the permutation of $[n]$ that is sorted in the order
in which they are picked up by the for-loop at Line 6.
We assume that the invariant $\mathcal{S}_v = \LFI'(v, i_j)$ holds
at the beginning of the $j$-th step of the for-loop. 
The value of $L[v][i_j]$ is determined at either Line 3, 9, or 11.
By Lemma~\ref{lem:recursion}, $L[v][i_j] = L(v,i_j)$ holds
if the value determined at Line 3 or 9.
By the invariant $\mathcal{S}_v = \LFI'(v, i_j)$ and Equation~\ref{eq:L},
$L[v][i_j] = L(v,i_j)$ also holds
if the value determined at Line 11.
Thus, $L(v,i_j)$ is computed correctly.

Next, let us consider the invariant for $\mathcal{S}_v$.
At Line 12, we set $[\ell_\new, r_\new]$ the minimal fixed-interval with $(v.\LC, i_j)$.
In the internal loop at Lines~13--17,
we delete all intervals $[\ell_s, r_s]$ from $\mathcal{S}_v$ such that
$[\ell_s, r_s]$ becomes \emph{non-minimal} within $\mathcal{S}_v\cup\{[\ell_{\new}, r_{\new}]\}$.
To do so, we repeatedly query $\texttt{succ}(\mathcal{S}_v, r_{\new}-1)$
and check whether the obtained interval includes $[\ell_\new, r_\new]$.
Finally, at the last two lines,
we insert the new interval $[\ell_\new, r_\new]$
if it does not include any other interval in $\mathcal{S}_v$.
Then, any intervals in $\mathcal{S}_v$ are not nested each other, and thus,
the invariant $\mathcal{S}_v = \LFI'(v, i_{j+1})$ holds at the end of the $j$-th step.

\subsubsection*{Time and Space Complexities of Algorithm~\ref{alg:vEB}.}
We analyze the number of calls for each operation on a predecessor dictionary.
Firstly, since $\texttt{insert}$ is called only at Line $20$,
it is called at most $n$ times throughout Algorithm~\ref{alg:vEB}.
Similarly, $\texttt{pred}$ at Line 7 and Line 18 is also called $O(n)$ times.
From Line $13$ to Line $17$, $\texttt{succ}$ and $\texttt{delete}$ are called in the internal loop.
The number of calls for $\texttt{delete}$ is at most that of $\texttt{insert}$, and hence, $\texttt{delete}$ is called at most $n$ times, and $\texttt{succ}$ as well.
Thus, throughout Algorithm~\ref{alg:vEB}, the total number of calls for all queries is $O(n)$.
Therefore, the running time of Algorithm~\ref{alg:vEB} is $O(n \log \log n)$.
Also, the space complexity of Algorithm~\ref{alg:vEB} is $O(n)$.

To summarize this section, we obtain the following lemma:
\begin{lemma}
\label{lem:vED}
  Algorithm~\ref{alg:vEB} computes function UPDATE-LEFT-MAX in $O(n \log \log n)$ time using $O(n)$ space.
\end{lemma}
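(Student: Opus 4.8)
The plan is to establish the two claims separately: correctness (already sketched in the excerpt, but I would make the invariant argument fully rigorous) and the complexity bound, which is really the heart of the lemma. For correctness, I would fix $v$ with $v.\LC \ne \nil$ and let $(i_1,\dots,i_n)$ be the enumeration of $[n]$ in descending order of $T[i]$, as in the algorithm. I would then prove by induction on $j$ that at the start of iteration $j$ the invariant $\mathcal{S}_v = \LFI'(v,i_j)$ holds, where I read $\LFI'(v,i_j)$ as the set of inclusion-minimal intervals of $\{[L(v.\LC,k),R(v.\LC,k)] \mid T[i_j] < T[k]\}$. The base case $j=1$ is immediate since $\mathcal{S}_v = \emptyset$ and no $k$ satisfies $T[i_1] < T[k]$ (as $i_1$ has the largest value). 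For the step, I would use the first observation (monotonicity of $\LFI$ in $T[i]$) to see that passing from $i_j$ to $i_{j+1}$ only \emph{adds} the single candidate interval $[\ell_\new,r_\new] = [L(v.\LC,i_j),R(v.\LC,i_j)]$, and then argue that the internal delete-loop removes exactly those members of $\mathcal{S}_v$ that $[\ell_\new,r_\new]$ now strictly contains, while the final conditional inserts $[\ell_\new,r_\new]$ iff it is itself minimal in the updated set — so $\mathcal{S}_v$ becomes precisely $\LFI'(v,i_{j+1})$. Combined with Equation~\eqref{eq:L}, the \texttt{pred} query at Line 7 then returns the interval realizing $L(v,i_j)$, so $L[v][i_j] = L(v,i_j)$; the $v.\LC = \nil$ branch is handled by Line 3 directly from Lemma~\ref{lem:recursion}.

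For the time bound, the key point is an amortization argument: I would count the total number of predecessor-dictionary operations over the whole call. The outer for-loop runs $n$ times, and each iteration performs $O(1)$ of \texttt{pred} (Lines 7 and 18) and at most one \texttt{insert} (Line 20), giving $O(n)$ of those. The subtle part is the internal \texttt{succ}/\texttt{delete} loop: a single iteration $j$ may delete many intervals. I would bound the total number of \texttt{delete} calls across all iterations by the total number of \texttt{insert} calls, since every deleted interval was previously inserted and each interval is inserted at most once (it corresponds to a fixed pivot $(v.\LC,k)$, and $k$ is picked exactly once). Hence $\texttt{delete}$ is called $O(n)$ times in total, and each iteration of the internal loop that does not delete terminates it via the \texttt{break}, contributing one extra \texttt{succ}; so \texttt{succ} is also called $O(n)$ times overall. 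Multiplying $O(n)$ operations by the $O(\log\log n)$ per-operation cost of the van Emde Boas tree (whose universe is $[n]$, so its size is $O(n)$) gives the $O(n\log\log n)$ time and $O(n)$ space claims, completing the proof.

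The main obstacle I anticipate is making the invariant step airtight — specifically, verifying that the greedy local updates (the \texttt{succ}-driven deletions that check $[\ell_\new,r_\new] \subseteq [\ell_s,r_s]$, followed by the single \texttt{pred} check before insertion) genuinely reproduce the global ``take all inclusion-minimal intervals'' operation. One has to argue that the intervals in $\mathcal{S}_v$ are always pairwise non-nested (an antichain under $\subseteq$), so that they can be totally ordered consistently by left endpoint and by right endpoint simultaneously; this is what lets a \emph{single} \texttt{pred} query at Line 18 suffice to detect whether the new interval contains some existing one, and what guarantees the deleted intervals form a contiguous block reachable by iterated \texttt{succ}. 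I would prove the antichain property as part of the inductive invariant and use it throughout. The complexity analysis itself is routine once the charging scheme (delete $\le$ insert) is stated.
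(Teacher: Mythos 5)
Your proposal is correct and follows essentially the same route as the paper: the loop invariant $\mathcal{S}_v = \LFI'(v,i_j)$ maintained over the descending order of $T[i]$, combined with Lemma~\ref{lem:recursion} and Equation~\ref{eq:L} for correctness, and the same amortized count of dictionary operations (at most $n$ insertions, deletions charged to insertions, one extra \texttt{succ} per iteration at the \texttt{break}, $O(1)$ \texttt{pred} calls per iteration) multiplied by the $O(\log\log n)$ van Emde Boas cost, with $O(n)$ space since the universe is $[n]$. The only slip is in your second paragraph, where you describe the delete-loop as removing intervals that $[\ell_\new,r_\new]$ contains, whereas it removes intervals \emph{containing} $[\ell_\new,r_\new]$ (the condition is $[\ell_\new,r_\new]\subseteq[\ell_s,r_s]$), as you yourself state correctly in your final paragraph; your explicit treatment of the antichain property of $\mathcal{S}_v$ is a useful detail the paper leaves implicit.
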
 \section{Reducing Space to \texorpdfstring{$O(n \log m)$}{O(n log m)}}
\label{ch:hld}
This section describes how to reduce the space complexity of
our algorithm to $O(n \log m)$.
Having the tables $L[v][i]$ and $R[v][i]$ for all pivot $(v, i) \in [m] \times [n]$ requires $\Theta(mn)$ space.
By Lemma~\ref{lem:recursion}, to compute the table values for node $v \in \CT(P)$, we only need the table values for $v.\LC$ and $v.\RC$.
Thus, we can discard the remaining values no longer referenced.
However, even if we discard such unnecessary ones, the space complexity will not be improved in the worst case if we fix the order in which subtree is visited first:
Let us assume that the left subtree is always visited first,
and consider pattern 
\begin{equation}
    P = (k+1, 1, \ldots, k+i, i, \ldots, 2k, k, 2k+1) \label{eq:worst-case-pattern}
\end{equation}
of length $m = 2k+1$.
It can be seen that every non-leaf node in $\CT(P)$ has exactly two children, and the left child is a leaf~(see also Figure~\ref{fig:hld} for a concrete example).
Thus, when we process the node $v$ numbered with $2k$,
we need to store at least $k+1$ tables
since all tables for $k+1$ leaves have been created and not been discarded yet, and it yields $\Theta(mn)$ space.
\begin{figure}[tb]
\begin{center}
\includegraphics[scale=0.55]{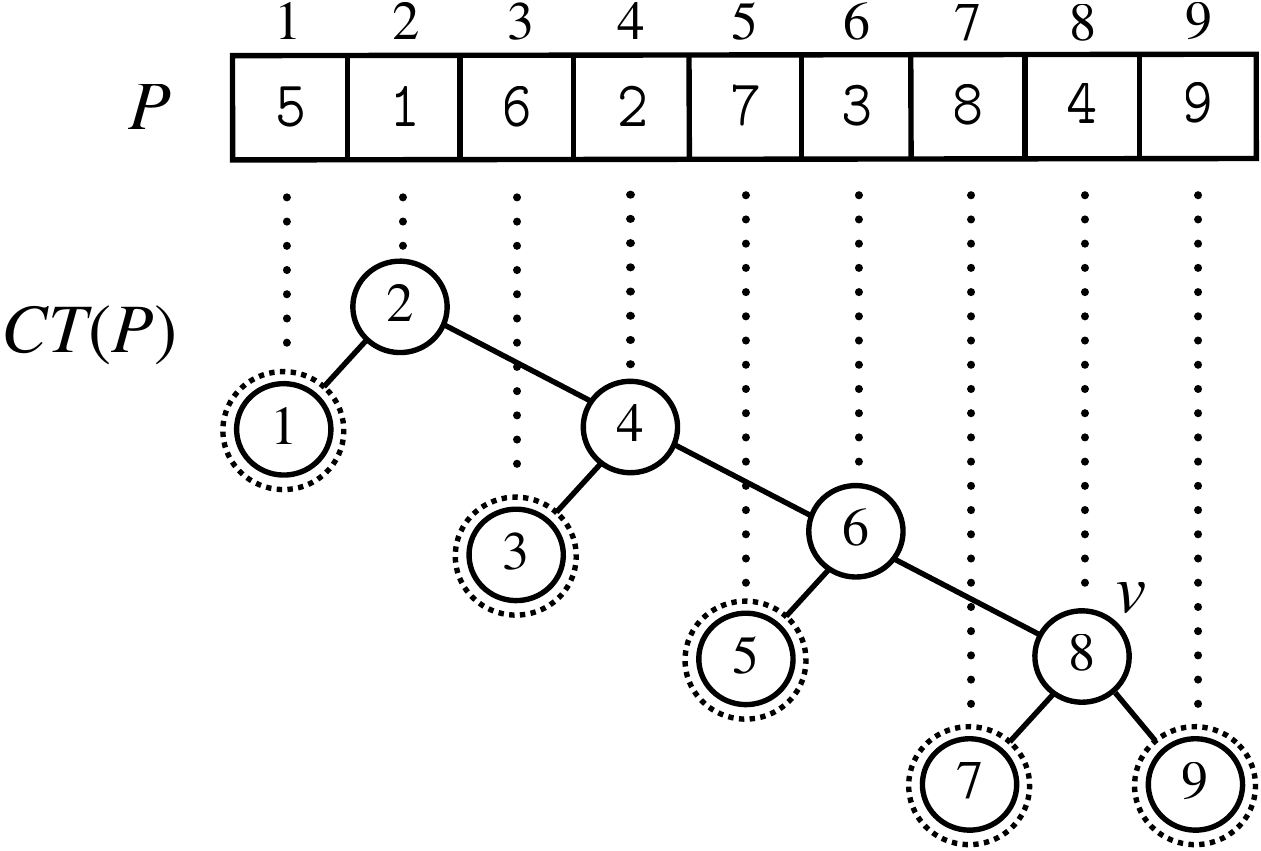}
\end{center}
\caption{
  Illustration for a worst case example of $CT(P)$ which causes the space complexity to be $\Theta(mn)$, where $P = (\mathtt{5}, \mathtt{1}, \mathtt{6}, \mathtt{2}, \mathtt{7}, \mathtt{3}, \mathtt{8}, \mathtt{4}, \mathtt{9})$.
  To compute the values of $L(v, i)$ and $R(v, i)$ for $v = 8$, we only need the values of $L(u, i)$ and $R(u, i)$ for node $u \in \{7, 9\}$.
  However,
  we have not finished computing $L(u, i)$ and $R(u, i)$ for node $u \in \{2, 4, 6\}$ yet,
  so we have to remember all of the values of $L(u, i)$ and $R(u, i)$ for node $u \in \{1, 3, 5, 7, 9\}$ simultaneously.
}
\label{fig:hld}
\end{figure}

To avoid such a case, we add a new rule for which subtree is visited first; when we perform a depth-first traversal, we visit the \emph{larger} subtree first if the current node $v$ has two children.
Specifically, we visit the left subtree first
if $|CT(P)_{v.\LC}| > |CT(P)_{v.\RC}|$,
and visit the right subtree first otherwise,
where the cardinality of a tree means the number of nodes in the tree.
Clearly, the correctness of the modified algorithm relies on the original one (i.e., Algorithm~\ref{alg:dp})
since the only difference is the rule that decides the order to visit.

In the following, we show that the rule makes the space complexity $O(n \log m)$.
We utilize a technique called \emph{heavy-path decomposition}~\cite{harel1984fast}~(a.k.a. heavy-light decomposition).
For each internal node $v \in [m]$ in $\CT(P)$,
we choose one of $v$'s children with the larger subtree size and mark it as \emph{heavy},
and we mark the other one as \emph{light} if it exists.
Exceptionally, we mark the root of $\CT(P)$ as heavy.
Then, it is known that the number of light nodes on any root-to-leaf path is $O(\log m)$~\cite{harel1984fast}.

Now, we prove that the algorithm requires $O(n \log m)$ space at any step.
Suppose we are now on node $u \in [m]$.
Let $\mathsf{p}_u$ be the path from the root to $u$ in $\CT(P)$.
Note that each node $v$ on $\mathsf{p}_u$ is marked as either heavy or light.
For each light node $v_\ell$ on $\mathsf{p}_u$, we have not discarded arrays $L$ and $R$ of size $O(n)$ associated with the sibling of $v_\ell$ to process the parent of $v_\ell$ in a later step.
For each heavy node $v_h$ on $\mathsf{p}_u$, we do not have to remember any array
since we recurse on $v_h$ first, and hence we require only $O(1)$ space for $v_h$.
Since there are at most $O(\log m)$ light nodes on $\mathsf{p}_u$,
the algorithm requires $O(n \log m)$ space at any step.

By combining these discussion with Theorem~\ref{the:dp} and Lemma~\ref{lem:vED}, we obtain our main theorem:
\begin{theorem}\rm
\label{the:set}
  The {\CTSM} problem can be solved in $O(mn \log\log n)$ time
  using $O(n \log m)$ space.
\end{theorem}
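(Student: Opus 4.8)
The plan is to combine the three ingredients already established: the correctness of the dynamic-programming recursion (Lemma~\ref{lem:recursion}), the fast per-node update via van~Emde~Boas trees (Lemma~\ref{lem:vED}), and the modified depth-first traversal order that visits the larger subtree first. Concretely, I would run Algorithm~\ref{alg:dp} but with two changes: replace the bodies of \textsc{update-left-max} and \textsc{update-right-min} by the predecessor-dictionary implementation of Algorithm~\ref{alg:vEB} (and its symmetric counterpart), and replace the plain bottom-up loop over the nodes of $\CT(P)$ by a post-order traversal in which, at every internal node with two children, the child whose subtree is larger is recursed upon first. Since the only thing that changed relative to Algorithm~\ref{alg:dp} is the order in which nodes are processed and the internal mechanics of the two update functions — neither of which affects which table entries are read when a node is processed, because each node still only reads the tables of its two children — correctness is inherited directly from Theorem~\ref{the:dp} and the correctness arguments for Algorithm~\ref{alg:vEB}.

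For the time bound: the traversal still processes each of the $m$ nodes of $\CT(P)$ exactly once, and by Lemma~\ref{lem:vED} each invocation of \textsc{update-left-max} or \textsc{update-right-min} costs $O(n\log\log n)$, plus $O(m)$ for building $\CT(P)$ and $O(n)$ for the final bucket-sort/minimal-interval extraction; hence the total is $O(mn\log\log n)$. The reordering of the traversal does not change this count, since a post-order visit of a tree on $m$ nodes touches each node once regardless of the left/right priority at branch points.

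For the space bound, I would argue as in Section~\ref{ch:hld}: at any moment during the traversal, when the algorithm is working at a node $u$, the only $\Theta(n)$-sized arrays that must still be retained are the $L,R$ tables of the siblings of the \emph{light} nodes on the root-to-$u$ path $\mathsf{p}_u$ — the table of a heavy child is consumed immediately after it is produced because a heavy child is recursed upon first and its parent is the very next node processed, and the tables deeper than $u$ or outside $\mathsf{p}_u$ have already been discarded. Invoking the classical heavy-path bound that any root-to-leaf path contains $O(\log m)$ light nodes, we get at most $O(\log m)$ retained arrays, each of size $O(n)$, for $O(n\log m)$ space; the van~Emde~Boas tree used inside a single update call adds only $O(n)$ more, and the pattern's Cartesian tree and auxiliary structures are $O(m)=O(n)$. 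So the peak space is $O(n\log m)$.

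The step I expect to require the most care is making the "heavy child consumed immediately" claim airtight, i.e.\ verifying that with a strict post-order traversal that always recurses into the heavy child last — wait, first — the heavy child's tables are freed before any new $\Theta(n)$ array is allocated, so that heavy nodes genuinely contribute only $O(1)$ to the live-memory count and do not transiently stack up. One must also be slightly careful that marking the root as heavy (as stated in Section~\ref{ch:hld}) and the tie-breaking rule when $|\CT(P)_{v.\LC}| = |\CT(P)_{v.\RC}|$ are handled consistently, and that the charging argument correctly attributes each surviving array to a light node on $\mathsf{p}_u$ rather than double-counting. These are exactly the points the Section~\ref{ch:hld} discussion makes, so the proof of Theorem~\ref{the:set} is essentially an assembly: invoke Lemma~\ref{lem:vED} for the time, invoke the light-node counting bound for the space, and cite Theorem~\ref{the:dp} for correctness.
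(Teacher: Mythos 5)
Your proposal matches the paper's own proof: Section~\ref{ch:hld} establishes the $O(n \log m)$ space bound via the heavy-path decomposition (retaining tables only for siblings of light nodes on the current root-to-node path, with $O(\log m)$ light nodes per path), and then Theorem~\ref{the:set} is obtained exactly as you describe, by combining this with the correctness from Theorem~\ref{the:dp} and the $O(n\log\log n)$ per-node update time of Lemma~\ref{lem:vED}. The points you flag as needing care (heavy child consumed immediately, charging surviving arrays to light nodes) are precisely the ones the paper's discussion addresses, so your assembly is the same argument.
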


Note that the same method as for Corollary~\ref{cor:trace} can not be applied to
the algorithm in this section
since most tables are discarded to save space. \section{Preliminary Experiments}

This section aims to investigate the behavior of each algorithm using artificial data.
In the first experiment we use randomly generated strings to see how the algorithms would behave on average~(Table~\ref{tab:seq-cons-time}).
In the second experiment, we use the worst-case instance presented in Section~\ref{ch:hld} to check the worst-case behavior of the proposed algorithms~(Table~\ref{tab:seq-cons-time-worst-case}).

We conducted experiments on mac OS Mojava 10.14.6 with Intel(R) Core(TM) i5-7360U CPU @ 2.30GHz. 
For each test, we use a single thread and limit the maximum run time by 60 minutes.
All programs are implemented using C++ language compiled with Apple LLVM version 10.0.1 (clang-1001.0.46.4) with -O3 optimization option.
We compared the running time and memory usage of our four proposed algorithms below by varying the length $n$ of text and the length $m$ of pattern:
\begin{itemize}
    \item \texttt{basic}: $O(mn^2)$-time and $O(mn)$-space algorithm (Algorithm~\ref{alg:dp}) explained in Section~\ref{ch:algobasic},
    \item \texttt{basic-HL}: $O(mn^2)$-time and $O(n \log m)$-space algorithm obtained by applying the idea of memory reduction in Section \ref{ch:hld} to \texttt{basic}.
\item \texttt{vEB}\footnote{For the implementation of van Emde Boas trees,
    we used the following library:
    \url{https://kopricky.github.io/code/Academic/van_emde_boas_tree.html}}: $O(mn \log \log n)$-time and $O(mn)$-space algorithm obtained by combining Algorithm~\ref{alg:dp} in Section~\ref{ch:algobasic} with Algorithm~\ref{alg:vEB} in Section~\ref{ch:algofaster}, and
    \item \texttt{vEB-HL}: $O(mn \log \log n)$-time and $O(n \log m)$-space algorithm obtained by applying the idea of memory reduction in Section \ref{ch:hld} to \texttt{vEB}.
\end{itemize}

Tables~\ref{tab:seq-cons-time} and~\ref{tab:seq-cons-time-worst-case} show the comparison of the performance among four algorithms above.
NA indicates that the measurement was terminated when the execution time exceeded 60 minutes.
Common to both Table~\ref{tab:seq-cons-time} and Table~\ref{tab:seq-cons-time-worst-case},
we use a text $T$ of length $n$ that is a randomly chosen permutation of $(1, 2, \ldots, n)$, and thus, $T$ is a length-$n$ string over the alphabet $\{1, 2, \ldots, n\}$.
In Table~\ref{tab:seq-cons-time}, we use a pattern $P$ that is a randomly chosen subsequence of $T$, and thus, $P$ is also a length-$m$ string over the alphabet $\{1, 2, \ldots, n\}$.
In Table~\ref{tab:seq-cons-time-worst-case}, we use the pattern $P = (k + 1, 1, \ldots, k + i, i, \ldots, 2k, k)$ of length $m = 2k$ in Equation~\ref{eq:worst-case-pattern} (see also Figure~\ref{fig:hld}), which requires $\Theta(mn)$ space when the idea of memory reduction in Section~\ref{ch:hld} is not applied.

\begin{table}[tb]
  \centering
  \caption{Comparison of four algorithms for solving {\CTSM} with randomly generated texts and patterns. The unit of time is second, and the unit of space is KB.}
  \label{tab:seq-cons-time}
  \begin{tabular}{|rr||rr|rr||rr|rr|}
    \hline
    & & \multicolumn{2}{c}{\texttt{basic}} & \multicolumn{2}{c}{\texttt{basic-HL}} & \multicolumn{2}{c}{\texttt{vEB}} & \multicolumn{2}{c|}{\texttt{vEB-HL}} \\ \hline \hline
    $n$ & $m$ & time & space & time & space & time & space & time & space \\ \hline
    $5000$ & $50$ & 2.03 & \textbf{1980} & 0.09 & 3148 & \textbf{0.03} & 2496 & \textbf{0.03} & 2124 \\
    $5000$ & $500$ & 19.20 & 2788 & 19.86 & \textbf{2168} & \textbf{0.37} & 3272 & \textbf{0.37} & 2596 \\
    $5000$ & $1000$ & 40.62 & 2932 & 40.34 & \textbf{2236} & \textbf{0.73} & 3520 & \textbf{0.73} & 2604 \\
    $5000$ & $2500$ & 96.27 & 3124 & 96.23 & \textbf{2368} & \textbf{1.84} & 3532 & \textbf{1.84} & 2816 \\ \hline
    $10000$ & $50$ & 7.77 & 2128 & 7.74 & \textbf{1804} & \textbf{0.07} & 2504 & \textbf{0.07} & 2188 \\
    $10000$ & $1000$ & 159.82 & 2740 & 159.70 & \textbf{1960} & \textbf{1.38} & 3128 & \textbf{1.38} & 2352 \\
    $10000$ & $2000$ & 321.07 & 2920 & 323.09 & \textbf{2068} & \textbf{3.08} & 3312 & 3.09 & 2452 \\
    $10000$ & $5000$ & 841.85 & 3252 & 835.29 & \textbf{2212} & \textbf{7.22} & 3644 & 7.23 & 2592 \\ \hline
    $50000$ & $50$ & 206.49 & 4976 & 211.24 & \textbf{3836} & \textbf{0.39} & 6076 & 0.40 & 4920 \\
    $50000$ & $5000$ & NA & NA & NA & NA & 39.98 & 13040 & \textbf{39.70} & \textbf{6576} \\
    $50000$ & $10000$ & NA & NA & NA & NA & \textbf{79.42} & 12684 & 80.20 & \textbf{7044} \\
    $50000$ & $25000$ & NA & NA & NA & NA & 199.14 & 13900 & \textbf{197.71} & \textbf{7340} \\ \hline
    \end{tabular}
\end{table}

\begin{table}[th]
  \centering
  \caption{Comparison of four algorithms for solving {\CTSM} with randomly generated texts and intentionally generated patterns of form $P=(k + 1, 1, \ldots, k + i, i, \ldots, 2k, k)$ in Equation~\ref{eq:worst-case-pattern}. The unit of time is second, and the unit of space is KB.}
  \label{tab:seq-cons-time-worst-case}
  \begin{tabular}{|rr||rr|rr||rr|rr|}
    \hline
    & & \multicolumn{2}{c}{\texttt{basic}} & \multicolumn{2}{c}{\texttt{basic-HL}} & \multicolumn{2}{c}{\texttt{vEB}} & \multicolumn{2}{c|}{\texttt{vEB-HL}} \\ \hline \hline
    $n$ & $m$ & time & space & time & space & time & space & time & space \\
    \hline
    $5000$ & $50$ & 1.85 & 2572 & 1.86 & \textbf{1940} & \textbf{0.03} & 2920 & \textbf{0.03} & 2208 \\
    $5000$ & $500$ & 18.01 & 11712 & 18.03 & \textbf{1912} & \textbf{0.23} & 12064 & \textbf{0.23} & 2372 \\
    $5000$ & $1000$ & 37.65 & 21804 & 37.94 & \textbf{2028} & 0.41 & 22236 & \textbf{0.40} & 2516 \\
    $5000$ & $2500$ & 92.58 & 52036 & 89.04 & \textbf{2220} & 0.96 & 52720 & \textbf{0.94} & 2960 \\ \hline
    $10000$ & $50$ & 7.39 & 3444 & 7.45 & \textbf{1644} & \textbf{0.07} & 3748 & \textbf{0.07} & 2032 \\
    $10000$ & $1000$ & 150.70 & 41632 & 153.18 & \textbf{1732} & 0.80 & 42192 & \textbf{0.79} & 2304 \\
    $10000$ & $2000$ & 301.57 & 81856 & 303.77 & \textbf{1852} & 1.49 & 82584 & \textbf{1.46} & 2600 \\
    $10000$ & $5000$ & 754.85 & 202408 & 759.71 & \textbf{2244} & 3.58 & 203656 & \textbf{3.49} & 3512 \\ \hline
    $50000$ & $50$ & 186.05 & 12024 & 186.63 & \textbf{3048} & \textbf{0.37} & 13116 & \textbf{0.37} & 4140 \\
    $50000$ & $5000$ & NA & NA & NA & NA & 18.36 & 650768 & \textbf{17.82} & \textbf{5616} \\
    $50000$ & $10000$ & NA & NA & NA & NA & 35.42 & 963068 & \textbf{34.25} & \textbf{7112} \\
    $50000$ & $25000$ & NA & NA & NA & NA & 87.28 & 998056 & \textbf{83.94} & \textbf{11600} \\ \hline
    \end{tabular}
\end{table}

Table~\ref{tab:seq-cons-time} shows that the running time of \texttt{vEB} is faster than that of \texttt{basic} for all test cases,
and the same result can be seen for \texttt{vEB-HL} and \texttt{basic-HL}.
Comparing the memory usage of \texttt{vEB} with that of \texttt{basic}, it can be seen that the \texttt{vEB} uses more memory than \texttt{basic},
since the memory usage of the van Emde Boas tree is constant times larger than that of a basic array.
The same is true for \texttt{vEB-HL} and \texttt{basic-HL}.
The only difference between \texttt{basic} (\texttt{vEB}) and \texttt{basic-HL} (\texttt{vEB-HL}) is the search order of the tree traversal, so they have little difference in the running time for all test cases.
Comparing these algorithms in terms of memory usage,
it can be seen that the \texttt{basic-HL} (\texttt{vEB-HL}) uses less memory than \texttt{basic} (\texttt{vEB}),
but the difference is not as pronounced as the theoretical difference in the space complexity.
This is because $P$ is generated at random, so there is not much bias in the size of the subtrees.

On the other hand, the results in Table~\ref{tab:seq-cons-time-worst-case} show that \texttt{basic-HL} and \texttt{vEB-HL} are significantly more memory efficient than \texttt{basic} and \texttt{vEB} in the case where $m$ is large.
This is consistent with the theoretical difference in the amount of the space complexity.

We also conducted the additional experiments with other algorithms:
\begin{itemize}
    \item \texttt{BST}: $O(mn \log n)$-time and $O(mn)$-space algorithm using the \emph{binary search tree}\footnote{For the implementation of binary search trees, we used \texttt{std::set} in C++.} instead of van Emde Boas tree in Section~\ref{ch:algofaster}, and
    \item \texttt{BST-HL}: $O(mn \log n)$-time and $O(mn)$-space algorithm obtained by applying the idea of memory reduction in Section~\ref{ch:hld} to \texttt{BST}.
\end{itemize}
\texttt{vEB} outperformed \texttt{BST} in both time and space for all test cases, and so do \texttt{vEB-HL} and \texttt{BST-HL}, which we feel is of independent interest.
The details of the results are shown in Appendix~\ref{app:table}.

 \section{Conclusions}
This paper introduced the Cartesian tree subsequence matching ({\CTSM}) problem: Given a text $T$ of length $n$ and a pattern $P$ of length $m$, find every minimal substring $S$ of $T$ such that $S$ contains a subsequence $S'$ which Cartesian-tree matches $P$.
This is the Cartesian-tree version of the episode matching~\cite{DasFGGK97}.
We first presented a basic dynamic programming algorithm running in $O(mn^2)$ time, and then proposed a faster $O(mn \log \log n)$-time solution to the problem. 
We showed how these algorithms can be performed with $O(n \log m)$ space.
Our experiments showed that our $O(mn \log \log n)$-time solution can be fast in practice.

An intriguing open problem is to show a non-trivial (conditional) lower bound for the \CTSM\ problem.
The episode matching (under the exact matching criterion) has $O((mn)^{1-\epsilon})$-time conditional lower bound under SETH~\cite{abs-2108-08613}.
Although a solution to the \CTSM\ problem that is significantly faster than $O(mn)$ seems unlikely, we have not found such a (conditional) lower bound yet.
We remark that the episode matching problem is not readily reducible to the \CTSM\ problem, 
since \CTSM\ allows for more relaxed pattern matching and the reported intervals can be shorter than those found by episode matching.

\subsection*{Acknowledgments}
This work was supported by JSPS KAKENHI Grant Numbers JP20J11983 (TM), 20H00595 (HA),
JST PRESTO Grant Number JPMJPR1922 (SI), and JST CREST Grant Number JPMJCR18K3 (HA).

The authors thank the anonymous referees for drawing our attention to reference~\cite{GawrychowskiGL20}.

\clearpage
\appendix
\section{Additional Table}\label{app:table}

\begin{table}[h!]
  \centering
  \caption{Comparison of six algorithms with additional two algorithms for solving {\CTSM} with randomly generated texts and patterns. The unit of time is second, and the unit of space is KB.}
  \label{tab:seq-cons-time-all}
    \rotatebox{90}{
        \begin{tabular}{|rr||rr|rr||rr|rr||rr|rr|}
        \hline
        & & \multicolumn{2}{c}{\texttt{basic}} & \multicolumn{2}{c}{\texttt{basic-HL}} & \multicolumn{2}{c}{\texttt{BST}} & \multicolumn{2}{c}{\texttt{BST-HL}} & \multicolumn{2}{c}{\texttt{vEB}} & \multicolumn{2}{c|}{\texttt{vEB-HL}} \\
        \hline \hline
        $n$ & $m$ & time & space & time & space & time & space & time & space & time & space & time & space \\ \hline
        $5000$ & $50$ & 2.03 & \textbf{1980} & 2.03 & 2020 & 0.09 & 3284 & 0.09 & 3148 & \textbf{0.03} & 2496 & \textbf{0.03} & 2124 \\
        $5000$ & $500$ & 19.20 & 2788 & 19.86 & \textbf{2168} & 0.85 & 3896 & 0.83 & 3240 & \textbf{0.37} & 3272 & \textbf{0.37} & 2596 \\
        $5000$ & $1000$ & 40.62 & 2932 & 40.34 & \textbf{2236} & 1.68 & 4084 & 1.67 & 3348 & \textbf{0.73} & 3520 & \textbf{0.73} & 2604 \\
        $5000$ & $2500$ & 96.27 & 3124 & 96.23 & \textbf{2368} & 4.21 & 4396 & 4.18 & 3480 & \textbf{1.84} & 3532 & \textbf{1.84} & 2816 \\ \hline
        $10000$ & $50$ & 7.77 & 2128 & 7.74 & \textbf{1804} & 0.20 & 4076 & 0.19 & 3360 & \textbf{0.07} & 2504 & \textbf{0.07} & 2188 \\
        $10000$ & $1000$ & 159.82 & 2740 & 159.70 & \textbf{1960} & 3.70 & 4724 & 3.64 & 3940 & \textbf{1.38} & 3128 & \textbf{1.38} & 2352 \\
        $10000$ & $2000$ & 321.07 & 2920 & 323.09 & \textbf{2068} & 8.25 & 4912 & 8.22 & 4048 & \textbf{3.08} & 3312 & 3.09 & 2452 \\
        $10000$ & $5000$ & 841.85 & 3252 & 835.29 & \textbf{2212} & 20.25 & 5232 & 19.69 & 4196 & \textbf{7.22} & 3644 & 7.23 & 2592 \\ \hline
        $50000$ & $50$ & 206.49 & 4976 & 211.24 & \textbf{3836} & 1.46 & 10204 & 1.45 & 10004 & \textbf{0.39} & 6076 & 0.40 & 4920 \\
        $50000$ & $5000$ & NA & NA & NA & NA & 141.22 & 17276 & 136.76 & 10868 & 39.98 & 13040 & \textbf{39.70} & \textbf{6576} \\
        $50000$ & $10000$ & NA & NA & NA & NA & 271.18 & 16920 & 272.29 & 11440 & \textbf{79.42} & 12684 & 80.20 & \textbf{7044} \\
        $50000$ & $25000$ & NA & NA & NA & NA & 691.63 & 18144 & 689.80 & 11780 & 199.14 & 13900 & \textbf{197.71} & \textbf{7340} \\ \hline
        \end{tabular}
    }
\end{table}

\end{document}